\protected\edef\ell{\noexpand\ensuremath{{\mathchar\the\ell}}}
\newtheorem{rem}{Remark}
\newtheorem{assum}{Assumption}
\newtheorem{defn}{Definition}
\newtheorem{prop}{Proposition}
\newtheorem{prob}{Problem}
\newtheorem{lemma}{Lemma}
\DeclareMathAlphabet{\pazocal}{OMS}{zplm}{m}{n}
\DeclareMathOperator*{\argmin}{arg\,min}
\newcommand{\cG}{\pazocal{G}}
\newcommand{\cE}{\pazocal{E}}
\newcommand{\cV}{\pazocal{V}}
\newcommand{\until}[1]{\{1,\dots, #1\}}
\newcommand{\danilo}[1]{{\color{red} Danilo: #1}}
\title{\LARGE \bf
Model Predictive Control for Multi-Agent Systems under Limited Communication and Time-Varying Network Topology
}
\author{Danilo Saccani, Lorenzo Fagiano, Melanie N. Zeilinger and Andrea Carron% <-this % stops a space
\thanks{This research has been supported by the Italian Ministry of University and Research (MIUR) under the PRIN 2017 grant n. 201732RS94 ``Systems of Tethered Multicopters'' and by the Swiss National Science Foundation under the NCCR Automation (grant agreement 51NF40\_180545).}% <-this % stops a space
\thanks{D. Saccani is with the Institute of Mechanical Engineering, Ecole Polytechnique Fédérale de Lausanne (EPFL), CH-1015 Lausanne, Switzerland. (email: \tt\small {danilo.saccani@epfl.ch) } }%
\thanks{L. Fagiano is with the Dipartimento di Elettronica, Informazione e Bioingegneria, Politecnico di Milano, Piazza Leonardo da Vinci 32, Milano, Italy. (email: \tt\small {lorenzo.fagiano@polimi.it) } }%
\thanks{M. N. Zeilinger and A. Carron are with the Institute for Dynamic Systems and Control, ETH Zurich, Switzerland. (email {\tt\small \{ carrona, mzeilinger\}@ethz.ch)}}%
}
\begin{document}

\maketitle
\thispagestyle{empty}
\pagestyle{empty}
%%%%%%%%%%%%%%%%%%%%%%%%%%%%%%%%%%%%%%%%%%%%%%%%%%%%%%%%%
%%%%%%%%%%%%%%%%%%%%%%%%%%%%%%%%%%%%%%%%%%%%%%%%%%%%
\begin{abstract}                % Abstract of not more than 250 words.
In control system networks, reconfiguration of the controller when agents are leaving or joining the network is still an open challenge, in particular when operation constraints that depend on each agent's behavior must be met. Drawing our motivation from mobile robot swarms, in this paper, we address this problem by optimizing individual agent performance while guaranteeing persistent constraint satisfaction in presence of bounded communication range and time-varying network topology. The approach we propose is a model predictive control (MPC) formulation, building on multi-trajectory MPC (mt-MPC) concepts.
To enable plug and play operations when the system is in closed-loop without the need of a request, the proposed MPC scheme predicts two different state trajectories in the same finite horizon optimal control problem. One trajectory drives the system to the desired target, assuming that the network topology will not change in the prediction horizon, while the second one ensures constraint satisfaction assuming a worst-case scenario in terms of new agents joining the network in the planning horizon.
Recursive feasibility and stability of the closed-loop system during plug and play operations are shown.
The approach effectiveness is illustrated with a numerical simulation.
\end{abstract}

%\begin{keyword}
%Predictive control; Multi-agent systems; Coordination of multiple vehicle systems; Nonlinear predictive control;Networked systems; Nonlinear cooperative control; Control under communication constraints (nonlinearity);
%\end{keyword}

%===============================================================================

\section{Introduction}
The interest in autonomous mobile robots is ever increasing for applications \cite{siegwart2011introduction} ranging from military technology \cite{patil2020survey} to self-driving vehicles \cite{bagloee2016autonomous}.
In particular, multi-agent motion planning has proven successful due to its relevance for numerous real-life applications, see for example \cite{shamma2008cooperative}.
Among the different approaches for dynamic path planning, optimization-based ones, such as Model Predictive Control (MPC), see \cite{rawlings2017model}, have received broad attention thanks to their ability to manage state and input constraints while minimizing multi-objective cost functions.
%When multiple agents are able to interact and communicate with each other, they can share information and cooperate to jointly solve tasks.
When the communication between multiple agents depends on the agent's state, the generated communication network is time-varying, and at each time step, subsystems can leave or join the network.
%, subject to constant changes in terms of subsystems that are added or removed. 
The problem of efficiently treating agents or nodes joining or leaving a network has been referred to as a Plug and Play (PnP) problem in the literature \cite{stoustrup2009plug}.
In order to give guarantees on stability and constraint satisfaction for the new network topology, available results resort on offline re-design of the local controllers to accept a plug-in request.
The problem of automatic PnP is still an open challenge \cite{stoustrup2009plug} when a subsystem is added or removed without a request.
In this paper, we address the problem of autonomously navigating a group of robots to a target while guaranteeing collision avoidance despite plug-in plug-out operations. Each agent is able to communicate with neighbouring robots.
Neighbouring subsystems are defined based on the agent's current state, and, during the navigation, the agent's state evolves and shifts its communication capabilities. Thus, the network topology can not be enforced to remain the same, but it is intrinsically time-varying and evolves during navigation. 
This time-varying nature of the network topology calls for an approach that must be able to tolerate plug-in/out operations without a-priori requests.
The presented solution is based on the multi-trajectory MPC concept firstly introduced in \cite{saccani2021autonomous} and nonlinear tracking MPC proposed in~\cite{limon2018nonlinear,fagiano2013generalized}.
Specifically, the multi-trajectory formulation  %where each agent predicts two state trajectories 
is used to balance two potentially conflicting requirements: tracking of the target and safe behaviour in case of network topology changes.
\subsubsection*{Related work}
Numerous applications necessitate enhancing performance without compromising safety. This problem has been addressed in different works, often exploiting optimization in order to satisfy safety constraints.
%%% SAFETY FILTERS
In \cite{wabersich2018safe}, the authors derived a predictive safety filter to ensure the system's safety while an external, potentially unsafe, learning-based control action optimizes the system's performance. The same concept has been applied to distributed networked systems in \cite{muntwiler2020distributed}.
%%% CONTROL BARRIER FUNCTION
Also control barrier function theory has been investigated to guarantee the system's safety \cite{ames2019control}.
In \cite{wabersich2022predictive}, a soft-constrained predictive control problem has been used as a recovery mechanism for a safety filter to guarantee the feasibility of the problem.
%%% mt-MPC
These approaches guarantee the system's safety but rely on an external controller to maximize the performance. Moreover, predictive safety filters are designed to handle uncertainty in system dynamics, but not changing network topologies.
To combine performance and safety in a single approach, in \cite{tordesillas2021faster}, the authors proposed the use of multiple trajectories in a trajectory planner where a back-up trajectory is used to ensure safety.
In \cite{saccani2021autonomous}, \cite{soloperto2022safe} and \cite{saccani2022}, the approach has been considered in an MPC framework providing theoretical guarantees on robust constraint satisfaction and convergence of the approach.
In \cite{alsterda2021contingency}, a similar concept has been exploited to trade-off the behaviour of a nominal with that of a contingency model to control a self-driving car.

%%%% PNP
%When a network setup is considered, it's not only important to ensure safety in the multi-agent system with a given number of agents but also to account for network topology changes. 
When a network setup is considered, ensuring safety while accounting for network topology changes is important. 
To this aim, a significant effort has been made to address the PnP problem. In \cite{zeilinger2013plug}, the authors present a transition scheme that prepares the system for the new network topology. The plug-in plug-out requests are elaborated by the network, and if the request can be accepted, a re-design of local controllers is performed. These results have been exploited in \cite{carron2021plug} to derive a safety filter able to provide safety verification during plug-in plug-out operations when a distributed learning-based control action is applied to the system.
Similarly to the approach presented in \cite{zeilinger2013plug}, in \cite{riverso2013plug}, an offline re-design of local controllers has been proposed for the plugging-in plugging-out of a subsystem when the network accepts the request.
In contrast, the approach proposed in this work designs a safe, feasible trajectory online that can always tolerate possible plug-in plug-out operations deriving from the time-varying network topology.
\subsubsection*{Contributions}  
The main contributions of this paper are twofold: the first is a safe control scheme for multi-agent systems ensuring collision avoidance with the current neighbouring agents using multi-trajectory MPC. The second contribution is to enable automatic plug-and-play operations in a time-varying network topology of agents with limited communication capabilities that, differently from other works in literature, cannot be denied.
%Each agents is able to communicate in a bidirectional way to agents close enough to its position leading to a time-varying communication topology defined based on agents’ current state.
%The proposed approach, amenable to distributed computation, considers two trajectories in the same FHOCP where a safe trajectory, accounting for a worst-case scenario, allows safe plug-and-play operations due to topology changes without request.
%%%%%%%%%%%%%%%%%%%%%%%%%%%%%%%%%%%%%%%
\section{Problem description}
In this section, we first introduce the system setup, discuss the communication model among agents, and the resulting communication network. Finally, we will state the problem we aim to solve.
\subsection{System setup}
We consider a group of mobile agents where each agent is identified by an integer $i\in\pazocal{M}=\{1,\dots,N_a\}$ and behaves according to the following discrete-time nonlinear dynamics
\begin{align}\label{eq:dt-system}
    x_i(k+1)&=f_i(x_i(k),u_i(k))  \\
    %\textcolor{red}{y_i(k)}&=h_i(x_i(k),u_i(k)) \\    
    p_i(k)&=C_ix_i(k), \nonumber
\end{align}
where $x_i(k)\in\mathbb{R}^{n_i}$ is the state vector, $u_i(k) \in\mathbb{R}^{m_i}$ the input vector, $f_i:\mathbb{R}^{n_i}\times \mathbb{R}^{m_i}\rightarrow \mathbb{R}^{n_i}$, and $C_i \in\mathbb{R}^{3 \times n_i}$ is a matrix that extracts the position $p_i(k)\in\mathbb{R}^3$ of the robot. We assume that each vehicle is able to measure its whole state~$x_i$. We denote with $(\bar{x}_i, \ \bar{u}_i)$ an equilibrium of system~\eqref{eq:dt-system} and we consider a state reference tracking problem where $r_i=f_i(\bar{x}_{r,i},\bar{u}_{r,i})$ is the constant state reference of the $i$-th agent. Furthermore, we assume that $f_i(x_i(k),u_i(k))$ is differentiable at every equilibrium point and the linearized model is controllable.
Let us consider, without loss of generality, that $p_i(k)$ is at the top of the state vector $x_i(k)$ and introduce the operator $$\phi (p_i) = [p_i^T,0, \dots, 0]^T\in\mathbb{R}^{n_i}$$ %that generates a vector of the dimension of $x_i(k)$ with first entry, corresponding to the position $p_i$ of the $i$-th robot.
\begin{assum} \label{ass:posInv}
The vehicles' dynamics~\eqref{eq:dt-system} are position invariant, i.e. $\forall p_j \in \mathbb{R}^3,\ x_i(k),\ u_i(k), \ f_i(x_i(k)+\phi(p_j),u_i(k))=x_i(k+1)+\phi(p_j)$.
\end{assum}
\begin{rem}As an example, Assumption \ref{ass:posInv} is satisfied when the position is the output of an integrator, %and does not affect the input to the same integrator, 
which is a typical condition in autonomous vehicles.\end{rem}
Finally, each agent is subject to convex time-invariant state and input constraints of the form 
\begin{equation}
    (x_i, u_i) \in \pazocal{X}_i\times\pazocal{U}_i, \ \forall i\in\pazocal{M}.
\end{equation}
\subsection{Communication and network topology}
\begin{figure}
	\centering
	\setlength\belowcaptionskip{-1.2\baselineskip}\includegraphics[width=.9\columnwidth]{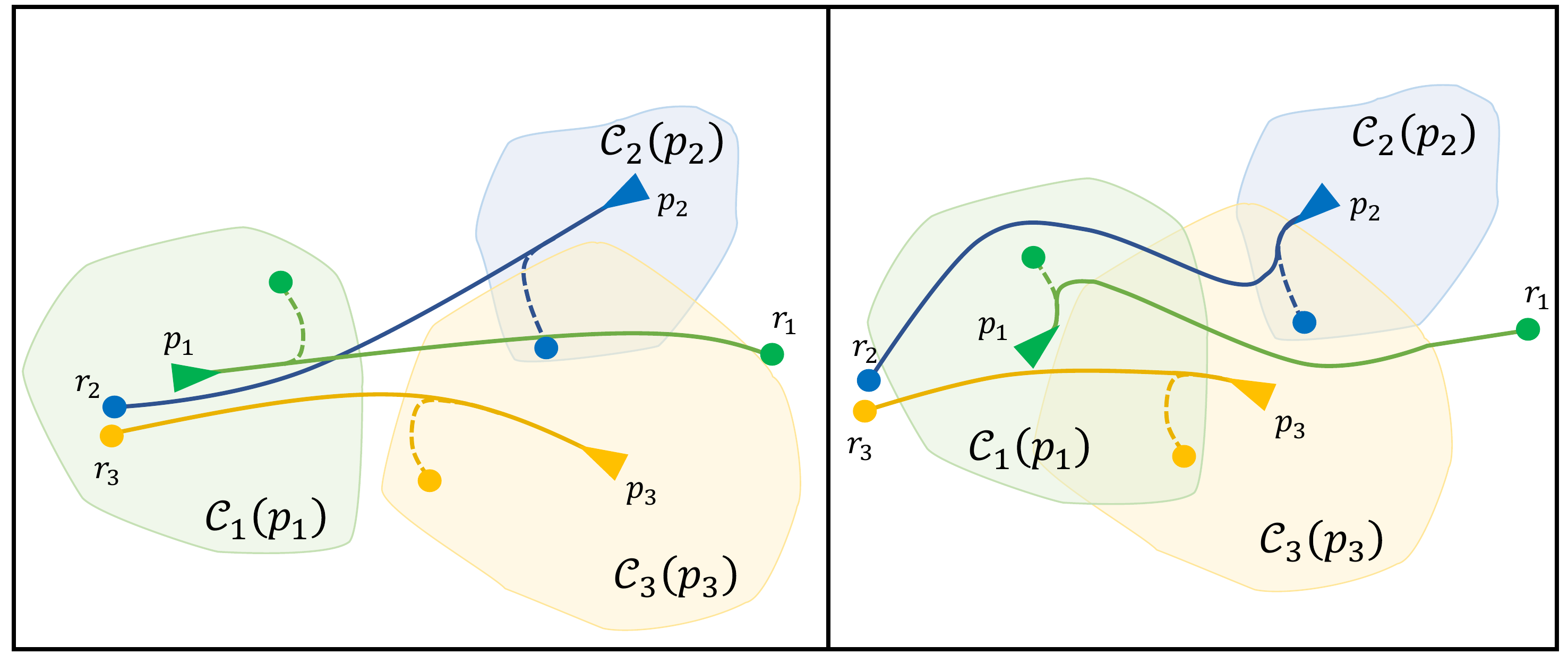}
	\caption{Plug and play multi-trajectory MPC for three agents whose position is represented by coloured triangles `$\vartriangle$'. In the left figure, the system is at time instant $k$, while in the right one, the system is at time $k+1$.  $\pazocal{C}_i(p_i)$ are the communication sets and $r_i$ the desired position references. Coloured lines: tracking trajectories; dashed lines: safe trajectories. Coloured dots are the final positions of by the various trajectories.}
	\label{fig:mt-MPC}
\end{figure}
%In this subsection, we present the considered communication and the obtained network topology. \\
Each agent is equipped with a communication system e.g., an antenna, characterized by a communication set $\pazocal{C}_i(p_i(k))=[p_i(k) \oplus \pazocal{D}_i]$, where $\pazocal{D}_i\subseteq\mathbb{R}^3$ is a constant compact convex set centered at the origin
and $\oplus$ is the Minkowski sum.
In some practical applications, the communication set may be originally non-convex, %for example, due to obstacles blocking the communication. %\danilo{This statement could be misunderstood since it could seem a time-varying set. The point highlighted by reviewer 5, about the position in the set is interesting but the set $\pazocal{D}_i$ could be always be approximated as $\bar{\pazocal{D}}_i \ominus v_{max}T_s$, where $v_{max}$ is the maximum velocity that any agents can reach}. 
in these cases, one can still take a convex under-approximation of the communication region. Let us consider the following assumption. %for the communication between agents.
\begin{assum}\label{ass:communication}
Agent $i$ is able to communicate in a bidirectional way with agent $j$ when
\begin{equation}
    \pazocal{C}_i(p_i(k))\cap \pazocal{C}_j(p_j(k)) \neq \emptyset.
\end{equation}
 
 \smallskip 
 
%Furthermore, we assume that when the communication sets of two agents intersect, the topology of the network changes.
\end{assum} 
The dependence of the communication sets on the system position leads to a time-varying communication topology, including the case where there is no communication among the agents.
Formally, such a topology could be described as a time-varying graph of which we can consider the connected sub-graphs. We denote each of the connected sub-graph as a cluster~$\cG_m(k)=(\cV_m(k),\cE_m(k))$ with $m=\{1,\dots,N_c(k)\}$, where the set of nodes~$\cV_m(k)\in\until{N_a}$ represents the agents in the cluster, and the set of edges~$\cE_m(k) \subset\cV_m(k) \times \cV_m(k)$ contains the pairs of agents~$\{i,j\}$, which can communicate with each other at time~$k$.
%For each cluster , an undirected communication sub-graph .
Thus, at each time step, the agents are grouped in a time-varying number of clusters $N_c(k)$, hence the sum of cardinalities of the set of nodes is equal to the total number of agents $N_a$ and the number of clusters is equal to the number of connected sub-graphs at time $k$. For each cluster $\pazocal{G}_m(k)$, by combining the local system dynamics in~\eqref{eq:dt-system}, the nonlinear dynamics of the \textit{cluster} system is $x(k+1)=f(x(k),u(k))$, where $x(k)=$col$_{i\in\pazocal{V}_m(k)} (x_i(k))$, $u(k)=$col$_{i\in\pazocal{V}_m(k)} (u_i(k))$ and we can define $p(k)=$col$_{i\in\pazocal{V}_m(k)} (p_i(k))$.
Fig. \ref{fig:mt-MPC} shows an example with three agents at two subsequent time steps.
At time step $k$ (on the left), the agents on the right are able to communicate generating the cluster $\pazocal{G}_1(k)$ with cardinality of the of nodes' set $|\pazocal{V}_1(k)|=2$. Instead, the agent on the left cannot communicate with the others, representing a cluster $\pazocal{G}_2(k)$ with nodes' cardinality $|\pazocal{V}_2(k)|=1$.
At the subsequent time step (on the right), agents $1$ and $2$ can communicate each with agent $3$, thus defining a new communication topology with only one cluster $\pazocal{G}_1(k+1)$ with cardinality of the of nodes' set $|\pazocal{V}_1(k+1)|=3$. 
Thus, for a cluster $m$, we have a plug-in operation when $|\pazocal{V}_m(k)|<|\pazocal{V}_m(k+1)|$ and a plug-out one when $|\pazocal{V}_m(k)|>|\pazocal{V}_m(k+1)|$.
We finally define a position dependent set of neighbouring systems for each agent in the considered cluster.
\begin{defn}[Neighboring systems]\label{def:NeighboringSys}
For each cluster $m$ with $m=\{1,\dots,N_c(k)\}$, %agent $j$ is a neighbor of agent $i$ if $\pazocal{C}_i(p_i(k))\cap \pazocal{C}_j(p_j(k)) \neq \emptyset$. L
let us denote the set of all neighbors of $i\in\cV_m(k)$, including $i$ itself as $\pazocal{N}_i(k)=\{i\} \cup \{j:\{i,j\}\in\cE_m(k) \}$. The states of all vehicles $j\in\pazocal{N}_i(k)$ are denoted as $x_{\pazocal{N}_i(k)}=$col$_{j\in\pazocal{N}_i(k)} (x_j)\in\mathbb{R}^{n_{\pazocal{N}_i(k)}}$, where col denote a vector which consists of the stacked
subvectors $x_j$.
\end{defn}
%For the sake of notational simplicity, we omit the dependence of $\pazocal{N}_i$ from the time $k$.

\subsection{Collision avoidance}
To model the requirements of collision avoidance among agents, let us define the obstacle avoidance non-convex coupling constraint between neighboring agents as:
\begin{equation}\label{eq:oa_constraint}
    h_i(x_{\pazocal{N}_i(k)})\leq 0, \ \forall i\in\pazocal{V}_m(k),   \ \forall m\in \mathbb{N}_1^{N_c(k)},
\end{equation}
where $\mathbb{N}_a^b=\{ n\in\mathbb{N} \ |\ a\leq n\leq b\}$.
%For example, by defining with $\pazocal{Q}(x_i(k))$ the space occupied by the vehicle $i$ at time $k$, we want to guarantee that $\pazocal{Q}(x_i(k))\cap \pazocal{Q}(x_j(k)) = \emptyset$, $\forall j \in \pazocal{N}_i(k)$. 
Possible choices for this constraint will be shown in Section~\ref{ss:collisionavoid}.
Each agent has to avoid collisions with neighbouring agents satisfying constraint~\eqref{eq:oa_constraint} $\forall k\geq 0$.
Thus, due to the time-varying nature of the communication topology %defined by Assumption~\ref{ass:communication} and Definition~\ref{def:NeighboringSys}
, each agent must be able to tolerate a possible variation of the neighboring systems set $\pazocal{N}_i(k)$ guaranteeing the satisfaction of constraint~\eqref{eq:oa_constraint}.
\subsection{Problem formulation}
We are now in position to state the following problem.
\begin{prob} \label{pr:feedbackLaw}
Consider $N_a$ mobile robots with dynamics~\eqref{eq:dt-system}, subject to local state and input constraints $(x_i(k), u_i(k))\in \pazocal{X}_i\times\pazocal{U}_i$, $\forall i\in\pazocal{M}$.
Each agent can communicate with neighbouring agents according to Assumption~\ref{ass:communication} defining a time-varying number of clusters ranging in the time-varying set $\{ 1,\dots,N_c(k)\}$.
Each cluster presents a time-varying network topology as described in Definition~\ref{def:NeighboringSys}. 
We aim to design a state feedback control law that drives every agent to their reference $r_i$ or to the closest feasible steady state, %whose meaning will be rigorously defined later on, 
while avoiding collision with neighbouring agents satisfying constraint~\eqref{eq:oa_constraint}, $\forall k\geq 0$, despite the time-varying nature
of the communication topology and of the neighboring systems.
\end{prob}
%In other words, we aim to design a state feedback control law, composed by the control laws computed in each cluster, by communicating only with neighbouring agents.
%The control law must allow every agent to reach a predefined goal location in space avoiding collisions with all the other agents despite possible plug-in plug-out operations of new agents in the current cluster.
\section{Plug-and-Play Multi-Trajectory MPC}
To solve Problem \ref{pr:feedbackLaw}, the predicted agent's state trajectory has to be robust to possible network topology changes. 
Due to the time-varying nature of the constraints, a robust approach ensuring constraint satisfaction assuming a worst-case scenario in terms of new agents joining the network can lead to too conservative behaviour\cite{saccani2021autonomous}. 
To guarantee robustness against network changes and, at the same time, exploit the best of the current information about the network, we adopt the multi-trajectory MPC (mt-MPC) concept proposed in \cite{soloperto2022safe,saccani2022} and on the nonlinear tracking MPC controller proposed in \cite{limon2018nonlinear,fagiano2013generalized}.
The main idea, particularly suitable for time-varying constraints, consists in defining an MPC problem with two trajectories, sharing the first control action, in the same finite-horizon optimal control problem (FHOCP). The first is a safe trajectory towards a polytopic convex safe set $\hat{\pazocal{S}}_i(p_i(k))=\{q_i\in\mathbb{R}^3 : A_{c,i}(q_i-p_i(k))\leq b_{c,i}\}$ to guarantee the system's safety, here considered in the form of robustness against network changes. The second one, also called tracking trajectory, aims at minimizing a given tracking cost. 
Fig. \ref{fig:mt-MPC} shows a qualitative example where %three agents establish communication at two different time steps. T
the two trajectories for each agent can be easily distinguished as well as the safe sets. 
%At time $k$ (image on the left), the three agents design a safe trajectory (dashed line), that allows stopping the vehicle inside the current communication set. The designed safe trajectory remains a feasible solution at time $k+1$ (image on the right) despite the newly established communication.  
%The multi-trajectory MPC formulation allows thus to obtain a closed-loop behaviour that trades off safety and tracking of the reference.
We describe the approach for a single cluster %$\pazocal{G}_m(k)$ 
and the same problem is solved by each cluster $\pazocal{G}_m(k)$ with $m\in\{1,\dots,N_c(k) \}$. We denote with the superscripts ``t'', ``s'' the variables pertaining to the tracking and safe trajectory, respectively.
Furthermore, let us denote with $x_{i,(j|k)}^{(\cdot)}$ the predicted trajectory %tracking and safe state trajectories of the $i$-th subsystem~\eqref{eq:dt-system}
at time $k+j$ given the state at time $k$. 
Given a finite horizon $N\in \mathbb{N}$, we introduce the two tracking and safe input sequences $U_i^t=\left\{u_{i,(0|k)}^{t} \ u_{i,(1|k)}^{t} \ \dots \ u_{i,(N-1|k)}^{t}\right\}$, $U_i^s=\left\{u_{i,(0|k)}^{s} \ u_{i,(1|k)}^{s} \ \dots \ u_{i,(N-1|k)}^{s}\right\}$, 
%\begin{align}
%    U_i^t=\left\{u_{i,(0|k)}^{t} \ u_{i,(1|k)}^{t} \ \dots \ u_{i,(N-1|k)}^{t}\right\} \\
%    U_i^s=\left\{u_{i,(0|k)}^{s} \ u_{i,(1|k)}^{s} \ \dots \ u_{i,(N-1|k)}^{s}\right\},
%\end{align}
where $u_{i,(0|k)}^t=u_{i,(0|k)}^s$ is the first common control action.
 %Similarly, combining the local state and input constraints \eqref{eq:constraints}, the cluster constraints for system~\eqref{eq:dt-system-global} result in:
%\begin{align}
%    x\in \pazocal{X}:=\pazocal{X}_1\times\dots\times\pazocal{X}_{|\pazocal{V}_m(k)|},\\
%    u\in \pazocal{U}:=\pazocal{U}_1\times\dots\times\pazocal{U}_{|\pazocal{V}_m(k)|}.
%\end{align}
Now, given a collection of state references $r=$col$_{i\in\pazocal{V}_m(k)} (r_i)$, safe sets $\hat{\pazocal{S}}=\{\hat{\pazocal{S}}_i(p_i(k)), \forall i\in\pazocal{V}_m\}$ and positive scalars $\hat{J}^s(k)=$col$_{i\in\pazocal{V}_m(k)}(\hat{J}^s_i(k))$, whose derivation will be clarified in Section \ref{S:MPCingredients}, the following FHOCP $\mathscr{P}(x,r,\hat{\pazocal{S}},\hat{J}^s)$ is solved at each time step $k\geq 0$:
\begin{subequations}\label{eq:MPCproblem}
\begin{align}
      &\min  \limits_{U^{t,s}_i, \bar{x}^{s}_i, \bar{u}^{s}_i} \;\;
   	\sum^{|\pazocal{V}_m(k)|}_{i=1} J_i(x_i,U^{t,s}_i,\bar{x}_i^{s},&&r_i)\\
    &\text{subject to:} \nonumber\\  \ \ \ \ \
    &x^{t,s}_{i,(0|k)} = x_i(k), \\
    &u_{i,(0|k)}^t=u_{i,(0|k)}^s, \\
	% DYNAMICS
	&x_{i,(j+1|k)}^{t,s} = f_{i}(x_{i,(j|k)}^{t,s},u_{i,(j|k)}^{t,s}),&& \forall j \in \mathbb{N}_0^{N-1}\\
	&p^{t,s}_{i,(j|k)}=C_i x^{t,s}_{i,(j|k)}, && \forall j \in \mathbb{N}_0^{N}\\
	&\left(x_{i,(j|k)}^{t,s}, u_{i,(j|k)}^{t,s}\right) \in \pazocal{X}_i\times\pazocal{U}_i,  && \forall j \in \mathbb{N}_0^{N-1}\label{seq:stateconstr}\\
	&h_i\left( x_{\pazocal{N}_i(k),(j|k)}^{t,s}\right)\leq 0, &&\forall j \in \mathbb{N}_0^{N}  \label{seq:couplingconstr} \\
	&A_c \left(p_{i,(j+1|k)}^{s}-p_{i,(j|k)}^{s}\right) \leq \frac{b_c}{N}, &&\forall j \in \mathbb{N}_0^{N} \label{seq:safeSet}\\
	&x_{i,(N|k)}^{s} = \bar{x}^{s}_i = f_i(\bar{x}^{s}_i,\bar{u}^{s}_i),&&\label{seq:termConstr}\\
	& J_i^s(x_i,U_i^{s},\bar{x}_i^{s},r_i)\leq \hat{J}_i^s(k), \label{seq:convConstr}\\
	& \forall i \in \pazocal{V}_m(k). \nonumber
    \end{align}
\end{subequations}
The optimization variables $\bar{x}_i^{s}, \bar{u}_i^s$ are the artificial reference for the safe trajectory and~\eqref{seq:convConstr} is a convergence constraint and will be detailed in Section~\ref{SS:costandconvcostr}.
Constraint~\eqref{seq:safeSet}, instead, forces the positions of the predicted safe trajectory $p^s_{i,(j|k)}$ to lie inside a safe set $\hat{\pazocal{S}}(p_i(k))$, whose definition will be clarified in the next section. \\
Problem $\mathscr{P}(x,r,\hat{\pazocal{S}})$ is a nonlinear program (NLP), where the non-convex constraint~\eqref{seq:couplingconstr} is also a coupling constraint between neighbouring subsystems.
The MPC problem~\eqref{eq:MPCproblem} is amenable to distributed computation, and it can be solved with optimization algorithms for distributed non-convex optimization. Practical solutions to solve this problem are outside the scope of this work. 
Two possible solutions are presented: (i) the use of real-time iteration (RTI) with the Alternating Direction Method of Multipliers (ADMM) solver, as demonstrated in~\cite{carron2023multi}, where the single quadratic programming (QP) sub-problem can be solved in a distributed fashion, and convergence is obtained through RTI~\cite{gros2020linear}; or (ii) the approach proposed in~\cite{engelmann2020decomposition} under the assumption of fully communication within the cluster.
%Two possible solutions to address the problem are: i) follow the same line of the approach presented in~\cite{carron2023multi} using real-time iteration (RTI) where the solver is Alternating Direction Method of Multipliers (ADMM)~\cite{boyd2011distributed}, and thus the single quadratic programming (QP) sub-problem can be solved in a distributed fashion, and the convergence is obtained by RTI~\cite{gros2020linear}; ii) under fully communication assumption in the cluster, use the approach proposed in~\cite{engelmann2020decomposition}.
Thus, the MPC control law, computed by each cluster $\pazocal{G}_m$, can be locally computed by each vehicle and applied in a receding horizon fashion.
%%%%%%%%%%%%%%%%%%%%%%%%%%%%%%%%%%%%%%%%%%%%%
\section{MPC design and theoretical analysis} \label{S:MPCingredients}
In the following subsections we define and analyze the different elements defining problem~\eqref{eq:MPCproblem}, and conclude with a theoretical analysis of the MPC scheme.
\subsection{Cost function and convergence constraint} \label{SS:costandconvcostr}
%The proposed multi-trajectory MPC predicts two trajectories sharing the first common control action: the tracking trajectory aims at minimizing the tracking cost; the safe trajectory, similarly to a single-trajectory MPC, is used to guarantee the existence of a safe alternative in case of changes in the network. Since the approach is implemented in receding horizon, the common control action allows to obtain a less conservative behaviour of the system as shown in \cite{saccani2022}.
To maximize the benefit of the multi-trajectory approach \cite{saccani2022}, and partially decouple constraint satisfaction (safety) from cost function minimization (tracking), we would ideally minimize the cost only of the tracking trajectory and neglect the safe one. 
However, to guarantee the convergence of the MPC scheme, also the safe trajectory has to be included~\cite{soloperto2022safe}.
The local cost functions of problem~\eqref{eq:MPCproblem} is:
\begin{multline}
\label{eq:cost_mtMPC}
J_i(x_i,U_i^{t,s},\bar{x}_i^{s},r_i) =  \sum_{j=0}^{N-1}
l^t_i(x^t_{i,(j|k)}-\bar{x}_{r,i},u^t_{i,(j|k)}-\bar{u}_{r,i})\\
+\beta V_{O,i}^s(\bar{x}_i^s - r_i), 
\end{multline} 
where $l^{t}_i(\cdot,\cdot):\mathbb{R}^{n_i}\times \mathbb{R}^{m_i}\rightarrow\mathbb{R}$ is the stage cost function for the tracking trajectory,  
$V^{s}_{O,i}:\mathbb{R}^{d_i}\rightarrow\mathbb{R}$ is the offset safe cost function and  $\beta>0$ is a weight whose meaning will be better clarified later on. Note that for $\beta \rightarrow 0$, the cost function tends to the ideal case where only the tracking trajectory is considered in the cost~\cite{saccani2021autonomous}.
As shown in \cite{soloperto2022safe}, the mt-MPC formulation does not ensure convergence without including additional constraints to enforce a decrease in the safe cost function. % to the desired target.
To guarantee the convergence, as we will show in Section \ref{ss:theoretical analysis}, we have to properly design the constraint~\eqref{seq:convConstr}. 
To this aim let us define the following safe cost function representing a performance index for the safe trajectory:
\begin{align} \label{eq:safe_cost}
    J_i^s(x_i,U_i^{s},\bar{x}_i^{s},r_i) =  &\sum_{j=0}^{N-1}
l^s_i(x^s_{i,(j|k)}-\bar{x}_i^s,u^s_{i,(j|k)}-\bar{u}_i^s)\\
&+ V_{O,i}^s(\bar{x}_i^s - r_i), \nonumber
\end{align}
where $l^{s}_i(\cdot,\cdot):\mathbb{R}^{n_i}\times \mathbb{R}^{m_i}\rightarrow\mathbb{R}$ is a suitable safe stage cost. We can now define, at each time step $k$, an upper bound on the safe trajectory cost for the current time step, by exploiting the tail of the optimal safe trajectory computed at the previous time instant
\begin{align}
    \hat{J}^s_{i}(k)=&J_i^{s}(x_i(k-1),U^{s,*}_i(k-1),\bar{x}_i^{s,*},r_i) \\ &-l^s_i(x^{s,*}_{i,(0|k-1)}-\bar{x}_i^{s,*},u^{s,*}_{i,(0|k-1)}-\bar{u}_i^{s,*}), \nonumber
\end{align}
where the superscript ``*'' denotes the optimal quantities computed at time $k-1$ by solving the FHOCP~\eqref{eq:MPCproblem}. 
Thus, constraint~\eqref{seq:convConstr} imposes that the cost of the safe trajectory must be smaller or equal to the one obtained with the candidate safe trajectory computed at the previous time step.
Let us now define the feasibility set for the FHOCP~\eqref{eq:MPCproblem} as $\pazocal{F}\doteq \{ x:\mathscr{P}(x,r,\hat{\pazocal{S}},\hat{J}^s) \text{ admits a solution} \},$
and let assume that $\pazocal{F}$ is not empty and bounded.
Now, for all the elements $x_i(k)$ in $x(k)\in\pazocal{F}$ let us define the set of reachable steady state starting from $x_i(k)$ as follows.
\begin{defn}[Set of reachable steady states]
    \label{def:set_reachable}
    The set $\pazocal{R}_i$ contains all the steady states that can be reached by the system starting from the initial condition $x_i(k)$ in at most $N$ time steps with an admissible input sequence $V_i$. %while satisfying the position-dependent constraint $\hat{\pazocal{S}}_i(p_i(k))$, time-invariant state and input constraints $\pazocal{X}_i$, $\pazocal{U}_i$ convergence constraint~\eqref{seq:convConstr} and coupling obstacle avoidance constraint~\eqref{seq:couplingconstr}.
\begin{align}
&\pazocal{R}_i\left(x_i(k),x_{\pazocal{N}_{i(k)}},N,\hat{\pazocal{S}}_i(p_i(k),\hat{J}_i^s\right)\doteq \{  \bar{x}_i \in \pazocal{X}_i: \nonumber \\ 
    & \exists V_i\in\mathbb{R}^{m_i \times N}: \nonumber
    v_{i,(j|k)}\in\pazocal{U}_i, \forall j \in \mathbb{N}_0^{N-1}; \nonumber \\
    & x_{i,(0|k)}=x_i(k), \nonumber \\ 
    &x_{i,(j|k)}=f_i(x_{i,(j-1|k)},v_{i,(j-1|k)}), \forall j \in \mathbb{N}_0^N;\\
    &x_{i,(N|k)}=\bar{x}_i = f_i(\bar{x}_i,v_{i,(N-1|k)});\nonumber \\
    &x_{i,(j|k)}\in \pazocal{X}_i, \  C_i x_{i,(j|k)}\in \hat{\pazocal{S}}(p_i(k)), \forall j \in \mathbb{N}_0^N; \nonumber \\
    & h_i\left(x_{\pazocal{N}_{i(k),(j|k)}}\right)\leq0, \ \forall j \in \mathbb{N}_0^N; \nonumber \\
    & J_i^s(x_i,V_i^{s},\bar{x}_i,r_i)\leq \hat{J}_i^s \nonumber
\}.
\end{align}
\end{defn}
We are now in position to define the optimal reachable steady state belonging to $\pazocal{R}_i$.
\begin{defn}[Optimal reachable steady state] \label{def:opt_steady} The optimal reachable steady state $\bar{x}_i^o(k)$ is obtained at each time step by solving the following optimization problem:
\begin{align}
    \bar{x}_i^o(k)\in & \argmin \limits_{\substack{\bar{x}_i\in \pazocal{R}_i}}   V_{O,i}^s(\bar{x}_i-r_i).
\end{align}

\smallskip
\begin{figure*}[ht!]
	\centering
\setlength\abovecaptionskip{-0.3\baselineskip}
\setlength\belowcaptionskip{-1.1\baselineskip}
\includegraphics[width=0.9\textwidth]{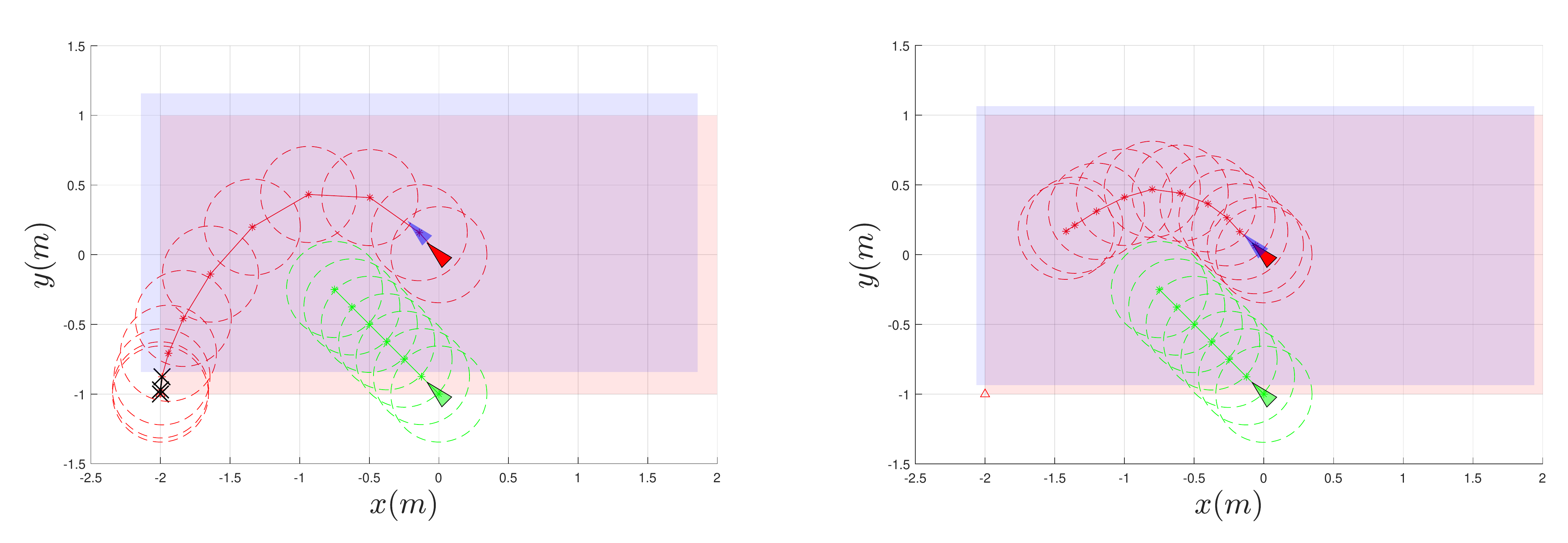}
\caption{Example showing the loss of recursive feasibility for a position-shifting state constraint by implementing~\eqref{eq:safeSetpoly} (left) and~\eqref{seq:safeSet} (right). We consider the trajectory of the red agent, while the green agent can be seen as a moving obstacle. Red line with `*' represents the predicted state trajectory at time $k$. Due to the presence of a terminal state constraint, the tail of this trajectory represents a candidate trajectory at time $k+1$ (see~\eqref{eq:candidateSol}). Light red polytope represents the state constraint at time $k$, while light blue polytope represents the state constraint shifted at the first predicted state $x_{1|k}$. In both examples, an obstacle avoidance constraint is imposed, and the dashed circles around the trajectories represent the size of the agents.}
	\label{fig:feas}
\end{figure*}
%For the optimal reachable steady state we can compute the optimal safe cost $J_i^{s,o}(x_i,U^{s,o}_i,\bar{x}_i^{s,o},r_i)$.
\end{defn}
%\begin{multline*} J_i^{s,o}(x_i,U^{s,o}_i,\bar{x}_i^{s,o},r_i)=\sum_{i=0}^{N-1} l^s_i (x^{s}_{i(j|k)}-\bar{x}^{s,o}_i(k),\\u^{s}_{i,(j|k)}-\bar{u}^{s,o}_i(k)) +V_{O,i}^s(\bar{x}^{s,o}_i(k) -r_i). \end{multline*}
Thus, two possibilities may arise during the navigation: \\
i) There exists a time instant $\bar{k} \geq 0$, such that the final target~$r_i$ becomes reachable for a long enough horizon $N$, i.e.
$
        \exists \bar{k} : \bar{x}^o_i(k)=r_i, \ \bar{x}^o_i(k)\in\pazocal{R}_i,  \  \forall k\geq \bar{k}. %\left(x_i(k),x_{\pazocal{N}_{i(k)}},N,\hat{\pazocal{S}}_i(p_i(k))\right),
$ \\
ii) There exists a time instant $\bar{k} \geq 0$, such that the final target $r_i$ cannot be reached, e.g. because an agent is between the vehicle and the target, but the position associated to the optimal steady state remains constant, $\forall k \geq \bar{k}$, i.e.\\
$ \exists \bar{k} :      \bar{x}^o_i(k)=\bar{x}^o_i\in\pazocal{R}_i, \        %\left(x_i(k),x_{\pazocal{N}_{i(k)}},N,\hat{\pazocal{S}}_i(p_i(k))\right),\\
         C_i \bar{x}_i^o \in \hat{\pazocal{S}}(p_i(k)), \  \forall k\geq \bar{k}.$ \\
Finally, similarly to \cite{limon2018nonlinear,fagiano2013generalized}, let us consider the following assumptions on the cost functions.
\begin{assum}\label{ass:cost_terminal} 
$f$ and $l^{(\cdot)}$ are continuous on $\bar{\pazocal{F}} \times \pazocal{U}$, where $\bar{\pazocal{F}}$ is the closure of $\pazocal{F}$, hence $\exists \alpha_f, \alpha_l\in\pazocal{K}_{\infty}: \|f_i(\bar{x},\bar{u})-f_i(\hat{x},\hat{u})\| \leq \alpha_f (\| (\bar{x},\bar{u})-(\hat{x},\hat{u}) \|)$, $|l_i^{(\cdot)}(\bar{x},\bar{u})-l_i^{(\cdot)}(\hat{x},\hat{u})| \leq \alpha_l (\| (\bar{x},\bar{u})-(\hat{x},\hat{u}) \|)$, $\forall (\bar{x},\bar{u}),(\hat{x},\hat{u})\in \pazocal{F}\times\pazocal{U}$, $\forall i$, for some vector norm $\| \cdot \|$. The stage cost $l_i^{(\cdot)}(x,u)$ is designed such that $l_i^{(\cdot)}(x,u)\geq \alpha_l (|x|)$, where $\alpha_l$ is a $\pazocal{K}_{\infty}$ function and the safe offset cost $V_{O,i}^{s}(\bar{x}_i-r_i)$ is positive definite, strictly convex and subdifferentiable functions with a unique minimizer that is
$\bar{x}^o_i=\arg\min\limits_{\bar{x}_i}V_{O,i}^{s}(\bar{x}_i-r_i).$ 
%Moreover, there exists a $\pazocal{K}_{\infty}$ function $\alpha_O$ such that
%    \begin{equation*}
 %       V_{O,i}^{s}( \bar{x}_i-r_i)-V_{O,i}^{s}(\hat{x}_i-r_i)\geq\alpha_O(\|\bar{x}_i-\hat{x}_i\|).
%    \end{equation*}
\end{assum}
Where a continuous, strictly increasing function $\alpha:[0,+\infty)\rightarrow[0,+\infty)$ is said to belong to class $\pazocal{K}_{\infty}$ if $\alpha(0)=0$ and if $\lim \limits_{a\rightarrow+\infty} \alpha(a) =+\infty $.
\subsection{Collision avoidance} \label{ss:collisionavoid}
%To avoid collisions between vehicles, a coupling constraint between neighboring systems is considered. 
By defining with $\pazocal{Q}(x_i(k))$ the space occupied by the vehicle $i$ at time $k$ we want to guarantee:
\begin{equation} \label{eq:oa_set}
    \pazocal{Q}(x_i(k))\cap \pazocal{Q}(x_j(k)) = \emptyset, \ \ \forall j \in \pazocal{N}_i(k) \backslash \{ i\},
\end{equation}
and thus we can rewrite constraint~\eqref{eq:oa_set}, $\forall j \in \pazocal{N}_i(k)\backslash \{ i\}$ as:
$$h_i(x_{\pazocal{N}_i(k)})= -\text{dist}(\pazocal{Q}(x_i),\pazocal{Q}(x_j))+d_{min} \leq0 $$
where $d_{min}\geq 0$ is a desired safety margin and the distance $\text{dist}(\pazocal{Q}(x_i),\pazocal{Q}(x_j))$ is defined as $\text{dist}(\pazocal{Q}(x_i),\pazocal{Q}(x_j)):= \min \limits_{t} \{ \| t \| : (\pazocal{Q}(x_i)+t) \cap \pazocal{Q}(x_j) \neq \emptyset \}$.
In \cite{zhang2020optimization}, it is shown how to rewrite~\eqref{eq:oa_set} as a smooth, differentiable constraint when a polytopic or ellipsoidal shape is considered. 
Let us consider the polytopic shape $\pazocal{T}(p_i)=\{p_i \in \mathbb{R}^{3}: A_i p_i \leq b_i \}$ (see \cite{zhang2020optimization} for extension to other convex sets). In this case, the space occupied by the vehicle~$i$ can be defined as the translation and rotation of the initial polytopic set $\pazocal{T}(p_i)$.
\begin{equation}
    \pazocal{Q}(x_i)=R(x_i)\pazocal{T}(p_i)+t(x_i), \ \ \ 
\end{equation}
where $R:\mathbb{R}^{n_i}\rightarrow \mathbb{R}^{3\times 3}$ is an (orthogonal) rotation matrix (see~\cite{zhang2020optimization}) and $t:\mathbb{R}^{n_i}\rightarrow \mathbb{R}^{3}$ is the translation vector.
To avoid collision between vehicle $i$ and neighbor $j$ we can write the constraint $h_i(x_{\pazocal{N}_{i}(k)})$:
\begin{align} \label{eq:OA_set}
    \text{dist}(\pazocal{Q}(x_i)&,\pazocal{Q}(x_j))>d_{min} \\
    & \Longleftrightarrow \exists \lambda \geq 0, \ \mu \geq 0: \nonumber \\
    & -(b_i-A_i t(x_j))^T \lambda -(b_j-A_j t(x_i))^T \mu > d_{min}, \nonumber\\
    & R(x_j)^T A_i^T \lambda + R(x_i)^T A_j^T \mu = 0, \ \ \| A_j^T \mu \| \leq 1, \nonumber
\end{align}
\begin{rem}
To the benefit of a reduced conservatism in the approximation of the vehicle's shape, constraint~\eqref{eq:OA_set} increases the complexity of the optimization problem due to the need for additional optimization variables. %with respect to~\eqref{eq:oa_circular}
Alternatively, the complexity can be reduced by over-approximating the shape of the vehicles with a sphere $\pazocal{T}_S(p_i)=\{ p_i \in \mathbb{R}^3: \ \|p_i\|_2 \leq \sigma \}$.
Thus, it can be imposed that the euclidean distance between the vehicle $i$ and $j$ is greater or equal than twice the radius $\sigma$ accounting for the maximum size of the vehicle $\forall j\in \pazocal{N}_i(k)\backslash \{ i\}$:
\begin{equation} \label{eq:oa_circular}
h_i(x_{\pazocal{N}_i(k)}) = -
    \| p_i - p_j \|^2_2 + (2\sigma)^2+d_{min}\leq0.
\end{equation}
\end{rem}
\subsection{Safe set}
To avoid collisions among different agents, it is crucial to be able to reach a steady state within the communication set~$\pazocal{C}_i(p_i(k))$.
%This is depicted in Fig. \ref{fig:mt-MPC}, where the safe (dashed) trajectory remains a feasible solution despite the newly established communication.
To define the safe set~$\pazocal{\hat{S}}_i(p_i)$ considered in constraint~\eqref{seq:safeSet}, where the safe trajectory can remain, let us firstly consider the size of the vehicle, by means of the set
\begin{equation}
    \pazocal{O}=\{ R(x_i)\pazocal{T}(p_i), \  \forall x_i \in (\pazocal{X}_i\times\pazocal{U}_i)^{\perp}  \},
\end{equation}
where $(\pazocal{X}_i\times\pazocal{U}_i)^{\perp}$ is the projection of $\pazocal{X}_i\times\pazocal{U}_i$ on the state space.
Thus, we use the computed set to tighten the communication set~$\pazocal{C}_i(p_i)$
\begin{equation} \label{eq:safeSetspeed}
    \pazocal{S}_i(p_i)=\pazocal{C}_i(p_i) \ominus \pazocal{O},
\end{equation}
where $\ominus$ is the Pontryagin difference.
\begin{rem}
Note that by considering constraint~\eqref{eq:oa_set}, the difference with the set $\pazocal{O}$ can be avoided at the cost of checking at each time step the belonging of $\pazocal{T}(p_i)$ to the set $\pazocal{C}_i(p_i)$. This can be done by adding additional optimization variables to check the inclusion of a convex polytope, as shown for example in \cite{blanchini1999set}.
\end{rem}
The obtained set $\pazocal{S}_i(p_i)$ represents a safe region where the vehicle can safely counteract to a possible change in the topology and the set $\pazocal{O}$ accounts for the size of the vehicle.
Finally, we under-approximate the set $\pazocal{S}_i(p_i)$ with the following polytopic set 
\begin{equation} \label{eq:safeSetpoly}
    \hat{\pazocal{S}}_i(p_i) \doteq \{p_i,q\in \mathbb{R}^3 : A_c (q-p_i) \leq b_c \},
\end{equation}
that, in practice, can be easily computed by performing the convex hull of some samples of the borders of~$\pazocal{S}_i(p_i)$.
Constraint~\eqref{eq:safeSetpoly}, is always centered at the vehicle's position~$p_i(k)$.
This constraint's feature, however, can cause a loss of feasibility. 
To guarantee that the problem is recursively feasible, we need to ensure that the tail of the predicted safe trajectory lies in the safe sets generated by shifting the set on the predicted trajectory $p^s_{i(j|k)}\in\hat{\pazocal{S}_i}(p^s_{i(l|k)})$, $\forall l \in \mathbb{N}_0^j$.
Fig.~\ref{fig:feas} shows an example, where the candidate trajectory of the red agent computed at time $k$ is not feasible for the constraint at the subsequent time step $k+1$.
To address this issue, we propose the implementation~\eqref{seq:safeSet} to satisfy constraint~\eqref{eq:safeSetpoly}. %for constraint
%\begin{equation} \label{eq:recFeasSafeSet}
%    	A_c \left(p_{i,(j+1|k)}^{s}-p_{i,(j|k)}^{s}\right) \leq \frac{b_c}{N}.
%\end{equation}
Specifically, constraint~\eqref{seq:safeSet} imposes that the predicted trajectory is able to reach the border of the polytopic safe set~\eqref{eq:safeSetpoly} only in $N$ time steps.
%%%%%%%%%%%%%%%%%%%%%%%%%%%%%%%%%%%%%%%%%%%%%%%%%
\subsection{Theoretical analysis} \label{ss:theoretical analysis}
We now analyze the theoretical guarantees of the proposed MPC scheme, by exploiting the different ingredients described in the previous section.
Before stating the main result, let us consider the following assumption.
\begin{assum}\label{ass:solFHOCP}
For any $x\in\pazocal{F}$, there exists a finite value of the optimal safe cost $J^{s,o}$ and the FHOCP $\mathscr{P}(x,r,\hat{\pazocal{S}},\hat{J}^s)$ has at least one global minimum, which is computed by the solver independently from its initialization.
\end{assum}
The latter assumption is quite usual and implicitly considered in the context of economic MPC and nonlinear MPC~\cite{fagiano2013generalized}. Moreover, it is satisfied if the FHOCP is convex, which is the important case of linear systems with convex constraints and convex tracking stage cost.\\
We are now in position to state the following proposition:
\begin{prop}\label{prop:convandrec}
Let Assumptions [\ref{ass:posInv}-\ref{ass:solFHOCP}] be satisfied and assume that the FHOCP~\eqref{eq:MPCproblem} at time $k=0$ is feasible.
Then, problem~\eqref{eq:MPCproblem} is recursively feasible and the system controlled by the MPC controller derived from its solution converges arbitrarily close to the optimal admissible equilibrium point according to Definition~\ref{def:opt_steady} while satisfying constraints $\forall k\geq0$.
\end{prop}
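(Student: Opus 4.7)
The plan is to establish the proposition in two parts: (i) recursive feasibility together with constraint satisfaction for all $k \geq 0$, and (ii) convergence of the closed-loop state to the optimal reachable steady state of Definition~\ref{def:opt_steady}. Part (i) will follow the standard tail-extension candidate-solution construction adapted to the multi-trajectory setup of problem~\eqref{eq:MPCproblem}, while part (ii) will rely on a Lyapunov-type decrease of the optimal safe cost enforced by the convergence constraint~\eqref{seq:convConstr}.

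For recursive feasibility, assuming feasibility at time $k$, I would construct at time $k+1$ the candidate safe input sequence
\[ \tilde{U}_i^{s}(k+1) = \{u^{s,*}_{i,(1|k)}, \dots, u^{s,*}_{i,(N-1|k)}, \bar{u}_i^{s,*}\}, \]
with artificial reference $\tilde{\bar x}_i^s(k+1) = \bar{x}_i^{s,*}$, and an analogous candidate tracking sequence (for simplicity one may take $\tilde U^t = \tilde U^s$). Dynamics, state/input~\eqref{seq:stateconstr}, and terminal equilibrium~\eqref{seq:termConstr} are preserved by construction, since the appended step stays at the steady state. The rate-form constraint~\eqref{seq:safeSet} is preserved because each of its inequalities is time-invariant and the appended zero-increment step trivially satisfies it; Assumption~\ref{ass:posInv} (position invariance) is what allows the polytopic safe set to recentre on $p_i(k+1)$ consistently with the shifted prediction. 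The convergence constraint~\eqref{seq:convConstr} is met with equality by the very definition of $\hat{J}^s_i(k+1)$.

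Convergence is then obtained by treating $J^{s,*}_i(k)$ as a Lyapunov-like function. Constraint~\eqref{seq:convConstr} applied to the above candidate gives
\[ J^{s,*}_i(k+1) \leq J^{s,*}_i(k) - l^s_i\bigl(x^{s,*}_{i,(0|k)} - \bar x_i^{s,*},\, u^{s,*}_{i,(0|k)} - \bar u_i^{s,*}\bigr), \]
and summing across $k$ with the $\pazocal{K}_\infty$ lower bound on $l^s_i$ from Assumption~\ref{ass:cost_terminal} forces the realised closed-loop trajectory to converge to the artificial safe reference. A second, outer argument standard in nonlinear tracking MPC~\cite{limon2018nonlinear,fagiano2013generalized} exploits strict convexity and subdifferentiability of $V^s_{O,i}$ to show that $\bar x_i^{s,*}$ itself approaches $\bar x_i^o$, distinguishing between the two post-Definition~\ref{def:opt_steady} cases (target reachable vs.\ blocked). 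The weight $\beta>0$ controls only the tracking/safe tradeoff in the cost and does not affect the descent, consistently with~\cite{saccani2022,soloperto2022safe}.

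The main obstacle I anticipate is the plug-in event: when $|\pazocal{V}_m(k+1)| > |\pazocal{V}_m(k)|$, the candidate built at time $k$ was unaware of the new neighbour, so feasibility with respect to the newly appearing coupling constraint~\eqref{seq:couplingconstr} is not automatic. The resolution must invoke the safe-set design~\eqref{eq:safeSetspeed}: both agents' safe tails are confined to the vehicle-size-tightened communication sets $\hat{\pazocal{S}}_i$ and $\hat{\pazocal{S}}_j$ and each such tail reaches its own steady equilibrium inside that set, so that Assumption~\ref{ass:communication} (which controls the geometry at the moment two agents first enter a common cluster) together with Assumption~\ref{ass:posInv} implies $h_i(\cdot)\leq 0$ on the newly formed pair over the whole horizon. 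The plug-out case is comparatively simple, because removing an agent merely drops constraints, so the previously optimal candidate remains feasible. Constraint satisfaction for all $k \geq 0$ then follows by induction from the assumed feasibility at $k=0$.
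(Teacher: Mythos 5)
Your recursive-feasibility argument is essentially the paper's: the candidate at $k+1$ is the shifted tail of the \emph{safe} trajectory (used for both $U^t$ and $U^s$), the appended steady-state step preserves \eqref{seq:stateconstr}, \eqref{seq:safeSet} and \eqref{seq:termConstr}, the convergence constraint \eqref{seq:convConstr} holds by construction of $\hat J^s_i(k+1)$, and the plug-in case is resolved exactly as in the paper — at time $k$ the two agents were not yet neighbours, so by Assumption~\ref{ass:communication} their communication sets were disjoint, and both safe tails are confined to the tightened sets \eqref{eq:safeSetspeed}, hence \eqref{seq:couplingconstr} holds for the newly formed pair over the whole horizon. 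That part is sound.

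The gap is in the convergence claim. Your Lyapunov argument (telescoping \eqref{seq:convConstr} and using the $\pazocal{K}_\infty$ lower bound on $l^s_i$) only shows that the closed-loop state converges to \emph{some} artificial safe steady state $\bar x^{s,*}$; it says nothing about how far $\bar x^{s,*}$ is from the optimal reachable steady state $\bar x^o$ of Definition~\ref{def:opt_steady}. You then assert that a ``standard outer argument'' based on strict convexity of $V^s_{O,i}$ closes this gap and that ``$\beta$ does not affect the descent.'' This is where the proof fails: in the cost \eqref{eq:cost_mtMPC} the tracking stage cost is measured against the \emph{true} reference $\bar x_{r,i}$, not against the artificial reference, so the optimizer trades off tracking cost against $\beta V^s_{O,i}(\bar x^s_i-r_i)$ and, for small $\beta$, may deliberately park the safe artificial reference far from $\bar x^o$. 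The paper's entire auxiliary Lemma exists to handle this: using the continuity bounds of Assumption~\ref{ass:cost_terminal} it shows the tracking-cost difference between any two feasible terminal choices is bounded by a constant $\eta$, so that choosing $\beta\geq\underline\beta(\epsilon)=\eta/\epsilon$ forces $V^s_O(\bar x^{s,*}-r)-V^s_O(\bar x^o-r)\leq\epsilon$ (otherwise the global optimality of the solver, Assumption~\ref{ass:solFHOCP}, is contradicted). This is precisely why the proposition claims convergence ``arbitrarily close to'' — rather than exactly to — the optimal equilibrium. Without this $\beta$-dependent argument (or an equivalent one), the near-optimality of the limit point is unproven, and your statement about $\beta$ being irrelevant to it is incorrect.
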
  
The Appendix contains the proof of Proposition~\ref{pr:feedbackLaw}.

%%%%%%%%%%%%%%%%%%%%%%%%%%%%%%%%%%%%%%%%%%%
\section{Numerical example}
In this section, we show the effectiveness of the approach via numerical simulations.
%by considering two scenarios presented in the following subsections.
%Firstly, we present a comparison with a standard MPC formulation, showing that a relatively simple example can lead to a loss of feasibility. Then, we propose a scenario where multiple autonomous vehicles have to cross a traffic intersection.
We employed a Quad-Core Intel Core i7 (2.8 GHz, 16 GB) on MATLAB 2020b under MS Windows, using CasADI \cite{Andersson2019} to build problem~\eqref{eq:MPCproblem} %, ALADIN-$\alpha$ \cite{engelmann2022aladin} to distribute the optimization problem 
and IPOPT \cite{wachter2006implementation} to solve local non-convex problems.
\begin{figure}
    \centering    
    \setlength\abovecaptionskip{-0.2\baselineskip}
    \setlength\belowcaptionskip{-1\baselineskip}
    \includegraphics[width=.78\columnwidth]{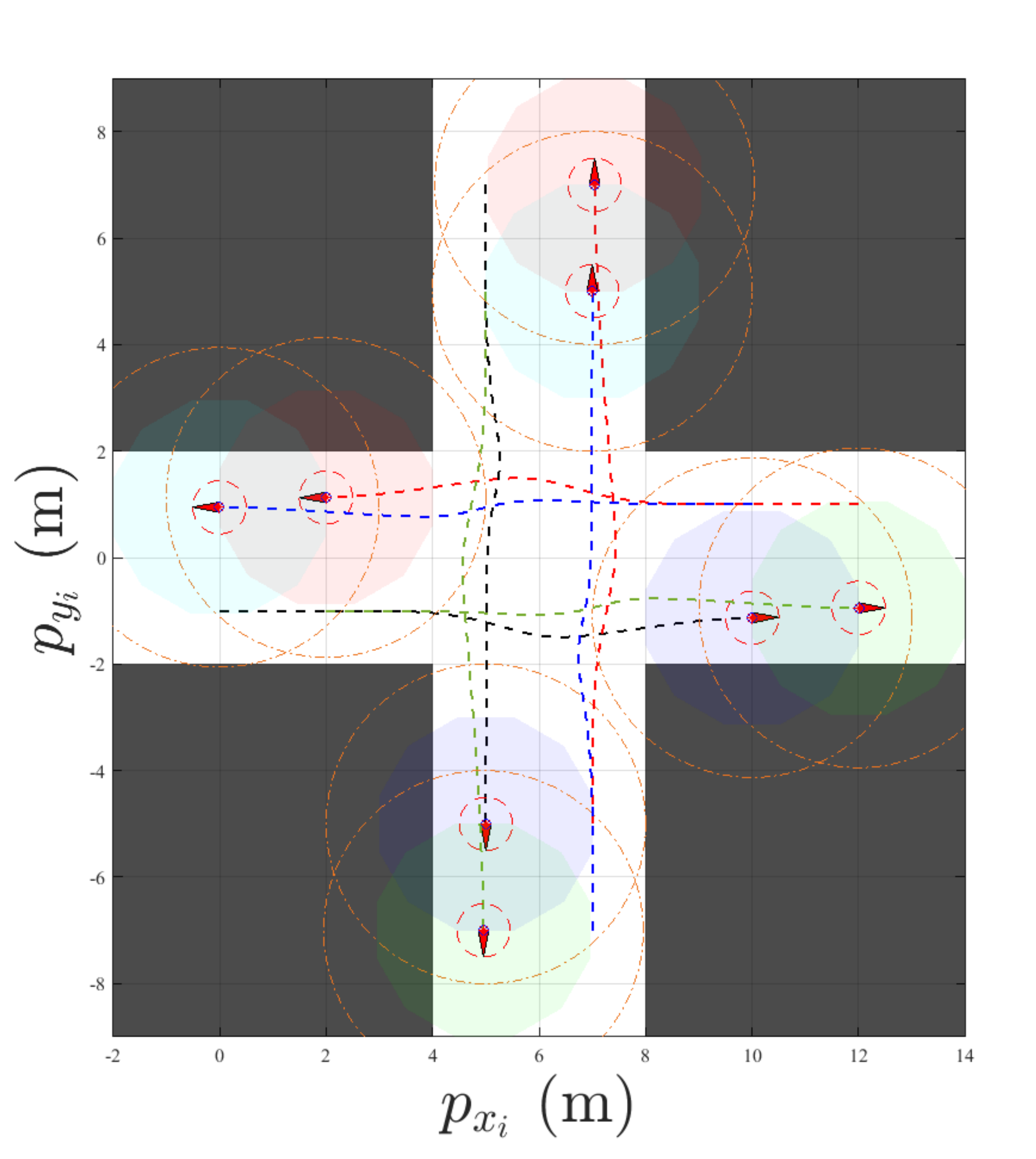}
    \caption{Simulation results of the intersection problem considered. Dashed colored lines represent vehicle trajectories. Red triangles are the final pose of the vehicles. Orange dashed circles represent the communication set and colored polytopes the safe sets $\hat{\pazocal{S}}_i(p_i)$.}
    \label{fig:example}
\end{figure}
% \begin{figure*}[ht] 
% \centering
% \subfloat[][\label{subfig:incrocio}]
% 		{\includegraphics[width=.42\textwidth]{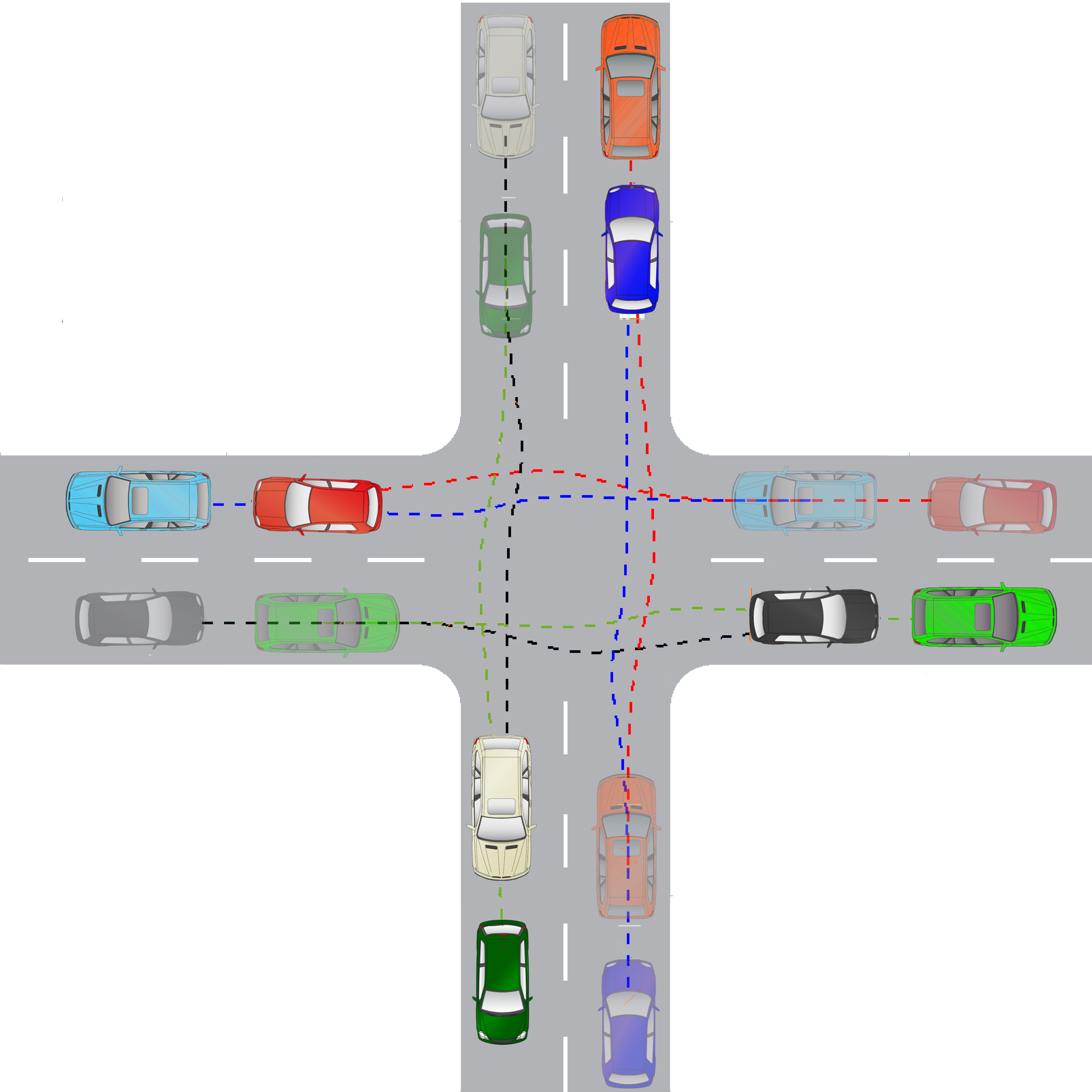}}\quad
% 		%
% \subfloat[][\label{subfig:simRes}]
% 		{\includegraphics[width=.40\textwidth]{Figures/crossroad.pdf}}\quad
% \caption{Illustration of the intersection problem considered (\ref{subfig:incrocio}). In (\ref{subfig:simRes}) simulation results are shown. Dashed colored lines represent vehicle trajectories. Red triangles are the final pose of the vehicles. Orange dashed circles represent the communication set and colored polytopes the safe sets $\hat{\pazocal{S}}_i(p_i)$.}
% \label{fig:example}
% \end{figure*}
%In this simulation, %shown in Fig.~\ref{fig:example}, 
We consider different vehicles approaching an intersection representing, for example, mobile robots in a warehouse.
We consider eight ground vehicles described by the following discrete time kinematic model:
\begin{equation*}\label{eq:dt-car_model}
    x_i(k+1)= \left(
    \begin{matrix}
    p_{x_i}(k)+T_s \cos{(\theta_i)}v_i(k)
    \\
    p_{y_i}(k)+T_s \sin{(\theta_i)}v_i(k)
    \\
    v_i(k)+T_s a_i(k)    \\
    \theta_i(k) + T_s \frac{v_i(k)}{L} \tan{(\gamma_i(k))}
    \\
    \gamma_i(k)+T_s \delta_i(k)
    \end{matrix} \right),
\end{equation*}
where $T_s$ is the sampling time, $\begin{bmatrix} p_{x_i} \\ p_{y_i}\end{bmatrix}\in \mathbb{R}^2$ is the vehicle's position and $\theta_i$, $\gamma_i$ $\in \mathbb{R}$ are the yaw and the steering angles. The inputs are the commanded acceleration $a_i\in \mathbb{R}$ and the steering rate $\delta_i\in\mathbb{R}$.
Thus, each local system has a state $x_i\in\mathbb{R}^5$ and an input $u_i \in\mathbb{R}^2$. The sampling time $T_s$ is $0.1$ $s$, and the wheelbase of the vehicle $L$ is $0.8$ $m$.
We assume a circular communication area $\pazocal{C}_i(p_i)=\{ p_i\in\mathbb{R}^2: \|p_i\|_2 \leq 3 m \}$ and we assume a circular shape for the vehicles with a diameter of $1$ $m$, leading to the non convex coupling constraint~\eqref{eq:oa_circular} and a set $\pazocal{O}=\{ p_i \in \mathbb{R}^2: \ \| p_i \|_2 \leq \sigma \}$ where $\sigma = 1$ $m$.
The acceleration and the steering rate are limited to $|a_i|\leq3$ $m$/$s^2$ and $|\delta_i|\leq 1$ rad/s. The velocity and the steering angle are limited to $v_i\leq2$ $m$/$s$ and $\gamma_i \in [-\frac{\pi}{2}, \ \frac{\pi}{2}]$ $rad$ and the yaw $\theta_i \in [0, \ 2\pi]$ $rad$. 
Each vehicle has to cross the workspace by guaranteeing the obstacle avoidance constraint and solving a problem only with the neighbouring subsystems.
Fig. \ref{fig:example} shows the trajectories obtained in closed loop together with the communication areas $\pazocal{C}_i(p_i)$ and the safe sets $\hat{\pazocal{S}}_i(p_i)$.
Finally Fig. \ref{fig:graph} shows the evolution of the time-varying communication network with plug and play operations at different iterations.
\begin{figure*}[h!]
	\centering
     \setlength\abovecaptionskip{-0.2\baselineskip}
\setlength\belowcaptionskip{-1.1\baselineskip}
\includegraphics[width=.9\textwidth]{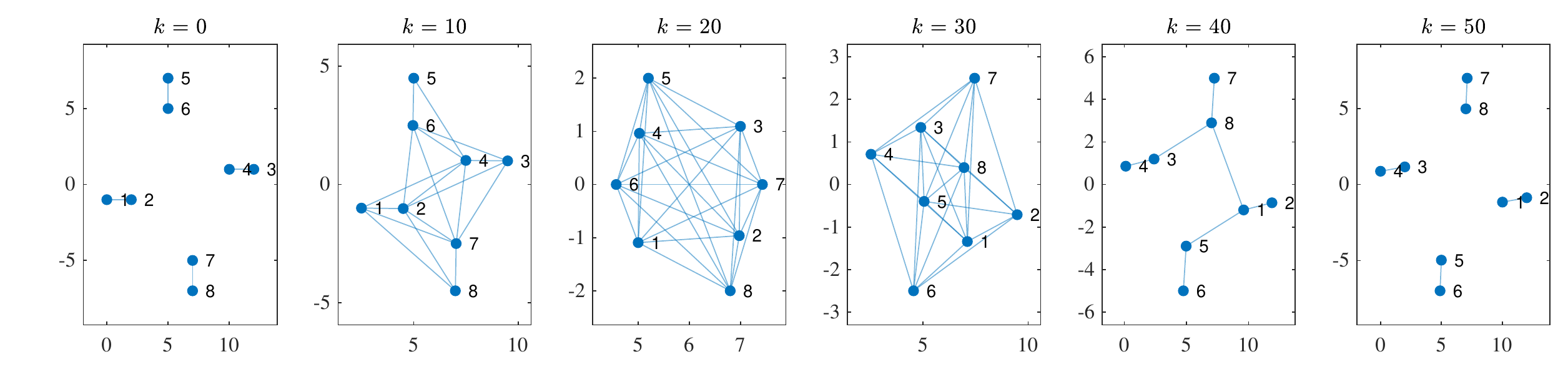}
	\caption{Evolution of the communication graph at iteration $k=0,10,20,30,40,50$}
	\label{fig:graph}
\end{figure*}
%\begin{figure}[h!]
%	\centering	\includegraphics[width=.85\columnwidth]{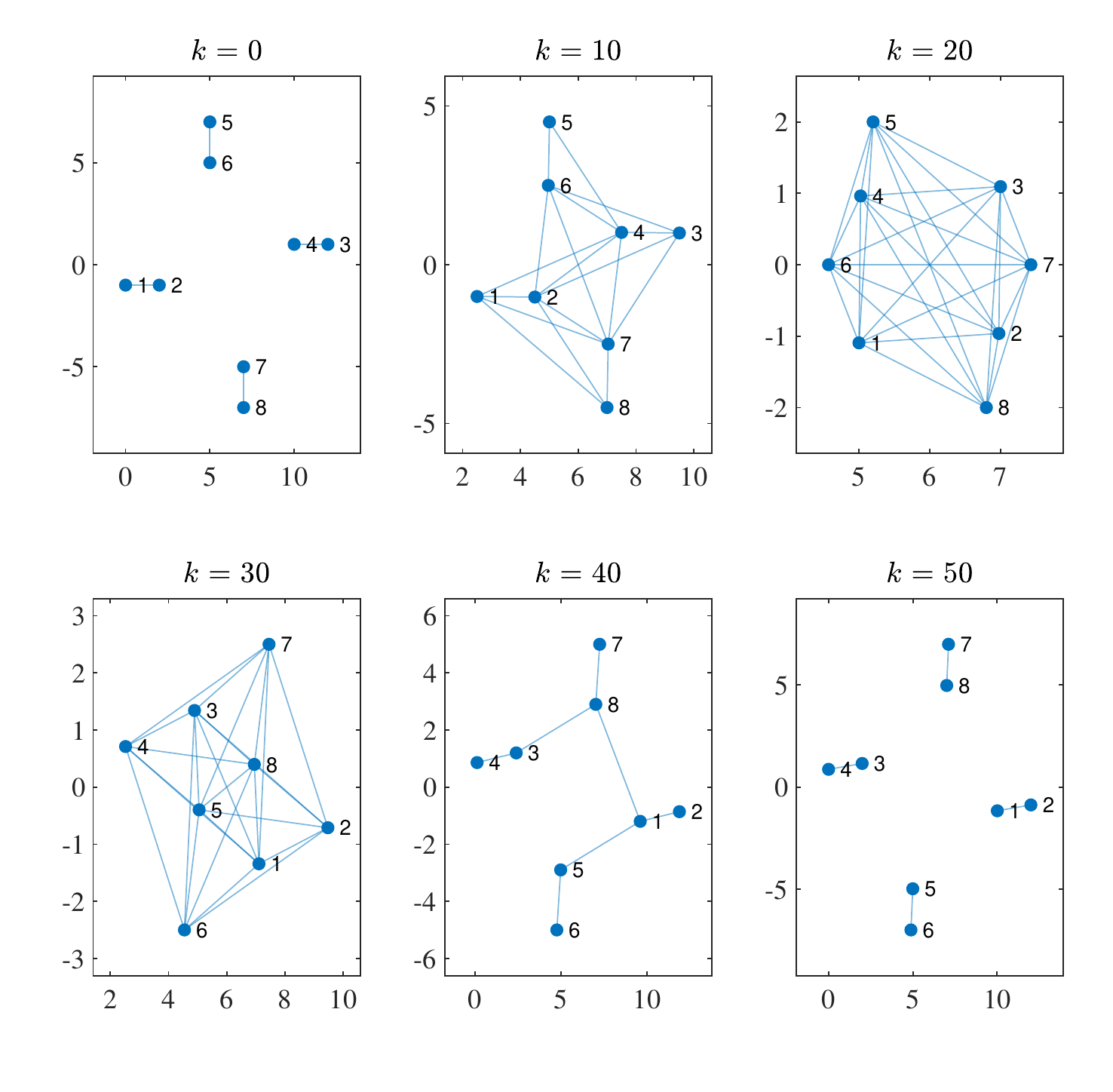}	\caption{Evolution of the communication graph at iteration $k=0,10,20,30,40,50$}	\label{fig:graph}
%\end{figure}
%%%%%%%%%%%%%%%%%%%%%%%%%%%%%%%%%%%%%%%
\section{Conclusion}
In this work, we propose a multi-trajectory MPC for the trajectory generation of multi-agent systems able to handle changes in the communication network topology in real-time without request.
We proved that the approach, based on nonlinear tracking MPC, makes the agents safely converge to their reference or to the closest admissible steady state.
Finally, a numerical example demonstrates the effectiveness of the approach in a traffic intersection.
The current research activities aim to test the approach's effectiveness on a custom hardware platform with experimental evaluation.

\appendix
%\section*{Proof of proposition 1}
Before proceeding with the proof of Proposition~\ref{prop:convandrec}, the following lemma must be introduced.
\begin{lemma}
Let Assumption [\ref{ass:posInv}-\ref{ass:solFHOCP}] hold. Thus, for any $\epsilon>0$, there exists a finite value $\underline{\beta}(\epsilon)$ such that, if $\beta\geq\underline{\beta}(\epsilon)$ in~\eqref{eq:cost_mtMPC}, $\forall x\in\pazocal{F}$ and any $\hat{J}^s\geq J^{s,o}$ then
\begin{equation} \label{eq:optimal_offset}
    V_O^s(\bar{x}^{s,*}-r)-V_O^{s}(\bar{x}^o-r)\leq  \epsilon,
\end{equation}
where $\bar{x}^{s,*}$ is the terminal safe state computed by solving $\mathscr{P}(x,r,\hat{\pazocal{S}},\hat{J}^s)$ and $V_O^s = \sum_{i=1}^{|\pazocal{V}_m|} V_{O,i}^s$.
\end{lemma}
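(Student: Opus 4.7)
The plan is to construct a feasible candidate solution of the FHOCP~\eqref{eq:MPCproblem} in which, for every agent $i$, the safe artificial reference is set to the optimal reachable steady state $\bar{x}^o_i$ from Definition~\ref{def:opt_steady}, and then use optimality to transform a bound on the candidate's cost into the claimed proximity between $\bar{x}^{s,*}$ and $\bar{x}^o$. Since by Definition~\ref{def:opt_steady} we have $\bar{x}^o_i \in \pazocal{R}_i$, and the hypothesis $\hat{J}^s\geq J^{s,o}$ guarantees that $\pazocal{R}_i$ is non-empty under the convergence-constraint bound, Definition~\ref{def:set_reachable} furnishes an admissible input sequence $V_i^s=\{v_{i,(0|k)},\dots,v_{i,(N-1|k)}\}$ that steers $x_i(k)$ to $\bar{x}^o_i$ in $N$ steps while satisfying all state, input, coupling, safe-set, terminal, and convergence constraints of~\eqref{eq:MPCproblem}.

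I form the candidate by setting $U_i^{s}=V_i^s$, the safe artificial reference to $(\bar{x}^o_i,\bar{u}^o_i)$ with $\bar{u}^o_i := v_{i,(N-1|k)}$, and $U_i^{t}=U_i^{s}$ for the tracking input sequence (this automatically satisfies $u^t_{i,(0|k)}=u^s_{i,(0|k)}$). By construction the candidate is feasible for~\eqref{eq:MPCproblem}. Its total cost reads $\tilde J = \tilde J^t + \beta V_O^s(\bar{x}^o - r)$, where $\tilde J^t := \sum_{i\in\pazocal{V}_m}\sum_{j=0}^{N-1} l_i^t(x^t_{i,(j|k)}-\bar{x}_{r,i},u^t_{i,(j|k)}-\bar{u}_{r,i})$. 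By Assumption~\ref{ass:cost_terminal}, $f$ and $l^t$ are continuous on the compact set $\bar{\pazocal{F}}\times\pazocal{U}$ (recall that $\pazocal{F}$ is assumed bounded), hence the candidate state-input pairs and the resulting $\tilde J^t$ depend continuously on $x(k)\in\bar{\pazocal{F}}$; consequently $\tilde J^t$ admits a finite upper bound $M<\infty$ uniform in $x\in\pazocal{F}$.

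Combining optimality of $\mathscr{P}(x,r,\hat{\pazocal{S}},\hat{J}^s)$ with the non-negativity of the tracking stage costs (Assumption~\ref{ass:cost_terminal} ensures $l_i^t\geq 0$), I obtain
\begin{equation*}
\beta\, V_O^s(\bar{x}^{s,*}-r) \;\leq\; J^\star \;\leq\; \tilde J \;\leq\; M + \beta\, V_O^s(\bar{x}^o - r).
\end{equation*}
Dividing by $\beta>0$ and choosing $\underline{\beta}(\epsilon) := M/\epsilon$ yields~\eqref{eq:optimal_offset} for every $\beta\geq\underline{\beta}(\epsilon)$ and every $x\in\pazocal{F}$. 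Strict convexity of $V_O^s$ together with $\bar{x}^{s,*}\in\pazocal{R}$ (by feasibility of the optimal solution) and the minimality of $\bar{x}^o$ over $\pazocal{R}$ further imply that the left-hand side is always non-negative, so the bound is genuinely two-sided.

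The step I expect to be the main obstacle is producing a \emph{uniform} bound $M$ on $\tilde J^t$ over all $x\in\pazocal{F}$: since the candidate's tracking branch is driven towards $\bar{x}^o$ rather than $\bar{x}_{r,i}$, its stage cost can in principle be large, but compactness of $\bar{\pazocal{F}}\times\pazocal{U}$ and continuity of the data (Assumption~\ref{ass:cost_terminal}) make $M$ finite and independent of $x$, which is what permits a single $\underline{\beta}(\epsilon)$ to work for all initial states. A subtler bookkeeping point is that Definition~\ref{def:set_reachable} is parameterised by the neighbours' trajectories entering the coupling constraint $h_i$; consistency is obtained by taking the $V_i^s$ from the joint cluster-level reachability certificate that defines $\bar{x}^o$, so that the coupling constraints~\eqref{seq:couplingconstr} are simultaneously satisfied for all $i\in\pazocal{V}_m(k)$.
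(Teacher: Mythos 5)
Your proof is correct and rests on the same central idea as the paper's: exhibit a feasible candidate whose safe artificial reference is the optimal reachable steady state $\bar{x}^o$, and choose $\beta$ large enough that the offset term dominates the tracking-cost discrepancy, giving $\underline{\beta}(\epsilon)=\mathrm{const}/\epsilon$. The execution differs in a way worth noting. The paper compares \emph{two} feasible candidates -- one terminating at $\tilde{\bar{x}}=\bar{x}^o$ and one at an arbitrary $\hat{\bar{x}}$ whose offset cost is worse by more than $\epsilon$ -- bounds the difference of their tracking costs by a constant $\eta$ built from the incremental continuity moduli $\alpha_f,\alpha_l$ of Assumption~\ref{ass:cost_terminal}, and concludes by contradiction with the global optimality of the solver (Assumption~\ref{ass:solFHOCP}). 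You instead use the non-negativity of the tracking stage cost to drop the optimizer's tracking term and obtain the direct sandwich $\beta V_O^s(\bar{x}^{s,*}-r)\leq J^\star\leq M+\beta V_O^s(\bar{x}^o-r)$, which avoids the second competitor and the contradiction step entirely; the price is that your constant $M$ bounds the \emph{absolute} tracking cost of the candidate rather than a difference of two tracking costs. One step deserves tightening: the candidate input sequence supplied by Definition~\ref{def:set_reachable} is merely asserted to exist, so $\tilde J^t$ is not a continuous function of $x(k)$ as you claim. The uniform bound $M$ still holds, but the correct justification is that the $N$-step reachable tube from the bounded set $\pazocal{F}$ under inputs in the bounded set $\pazocal{U}$ is bounded (via the modulus $\alpha_f$), so every stage cost is evaluated on a fixed compact set on which $l^t$ is continuous -- the same compactness mechanism that makes the paper's $\eta$ finite. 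With that repair, and your (correctly flagged) remark that the sequences $V_i^s$ must come from a joint cluster-level reachability certificate so that the coupling constraints~\eqref{seq:couplingconstr} hold simultaneously for all $i\in\pazocal{V}_m(k)$, the argument is complete.
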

\begin{proof}
Let $\tilde{\bar{x}}_i$ be a steady state such that: $
%\begin{align*}
    \tilde{\bar{x}}_i\in \argmin \limits_{\bar{x}_i\in \pazocal{R}_i}   V_{O,i}^s(\bar{x}_i-r_i)
%\end{align*}
$ and $\tilde{\bar{u}}_i$ the associated control input such that $\bar{x}_i=f_i(\bar{x}_i,\bar{u}_i)$.
Let $\tilde{V}_i^s\in \pazocal{U}_i$ be a sequence of $N$ control actions such that $x^s_{i,(N|k)}=\tilde{\bar{x}}_i$ and $\tilde{v}^s_{i,(N-1|k)}=\tilde{\bar{u}}_i$. This sequence exists according to Definition~\ref{def:set_reachable}. 
%Furthermore, since $\tilde{\bar{x}}_i$ is the unique minimizer of $V_{O,i}^s$, we have that $J_i^s(x_i,\tilde{V}_i^{s},\bar{x}^{s}_i,r_i)=J_i^{s,o}(x_i,U^{s,o}_i,\bar{x}_i^{s,o},r_i)\leq \hat{J}^s_i$. 
Let also consider a sequence $\tilde{V}^t_i\in \pazocal{U}_i$ of~$N$ control actions such that: $\tilde{v}^t_{i,(0|k)}=\tilde{v}^s_{i,(0|k)}$. Thus, the sequences $\tilde{V}^{t,s}=$col$_{i\in\pazocal{V}_m(k)}(\tilde{V}^{t,s}_i)$, %$\tilde{V}^t=$col$_{i\in\pazocal{V}_m(k)}(\tilde{V}^t_i)$ 
and the steady state $\tilde{\bar{x}}=$col$_{i\in\pazocal{V}_m(k)}(\tilde{\bar{x}}_i)$, $\tilde{\bar{u}}=$col$_{i\in\pazocal{V}_m(k)}(\tilde{\bar{u}}_i)$ are a feasible solution for problem $\mathscr{P}(x,r,\hat{\pazocal{S}},\hat{J}^s)$.
The cost associated with $\tilde{V}^{t,s}$ is
$
    J(x,\tilde{V}^{t,s},\tilde{\bar{x}},r) = \sum_{i=1}^{|\pazocal{V}_m|}\sum_{j=1}^{N-1} l^t_i(\tilde{\xi}^t_{i,(j|k)}-\tilde{\bar{x}}_i,\tilde{v}^t_{i,(j|k)}-\tilde{\bar{u}}_i) + \beta V_{O,i}^s(\tilde{\bar{x}}_i-r_i),
$
where $\tilde{\xi}^t_{i,(j|k)}$, $\forall j\in\mathbb{N}_0^N$ is the state trajectory obtained applying the sequence $\tilde{V}^t_i$.\\
Let us now consider, any other possible state $\hat{\bar{x}}_i\in\pazocal{R}_i$ 
%\left(x_i(k),x_{\pazocal{N}_{i(k)}},N,\hat{\pazocal{S}}_i(p_i(k))\right)$ 
and input $\hat{\bar{u}}_i\in\pazocal{U}_i$ such that $\hat{\bar{x}}_i=f_i(\hat{\bar{x}}_i,\hat{\bar{u}}_i)$, $\forall i \in \pazocal{V}_m$ and the input sequences $\hat{V}^{t,s}$ 
%Thanks to Definition~\ref{def:set_reachable}, let us consider the two input sequences $\hat{V}^s$, $\hat{V}^t$ 
such that $V_{O}^s(\hat{\bar{x}}-r)- V_{O}^s(\tilde{\bar{x}}-r)>\epsilon$. Thus, $\hat{V}^{t,s}$, $\hat{\bar{x}}$, $\hat{\bar{u}}$ are a feasible solution for problem 
$\mathscr{P}(x,r,\hat{\pazocal{S}},\hat{J}^s)$.
The cost associated with $\hat{V}^{t,s}$ is
$
    J(x,\hat{V}^{t,s},\hat{\bar{x}},r) = \sum_{i=1}^{|\pazocal{V}_m|}\sum_{j=1}^{N-1} l^t_i(\hat{\xi}^t_{i,(j|k)}-\hat{\bar{x}}_i,\hat{v}^t_{i,(j|k)}-\hat{\bar{u}}_i)+  \beta V_{O,i}^s(\hat{\bar{x}}_i-r_i).
$
Thus, the difference between $J(x,\tilde{V}^{t,s},\tilde{\bar{x}},r)$ and $J(x,\hat{V}^{t,s}(k),\hat{\bar{x}},r)$, denoted for simplicity from now on as $J(x,\tilde{V}^{t,s})$ and $J(x,\hat{V}^{t,s})$ is given by:
\begin{multline}
    J(x,\tilde{V}^{t,s})- J(x,\hat{V}^{t,s}) = \beta[V_{O}^s(\tilde{\bar{x}}-r) - V_{O}^s(\hat{\bar{x}}-r)] \\
    +\sum_{i=1}^{|\pazocal{V}_m|}\sum_{j=1}^{N-1} [ l^t_i(\tilde{\xi}^t_{i,(j|k)}-\tilde{\bar{x}},\tilde{v}^t_{i,(j|k)}-\tilde{\bar{u}}) \\ -l^t_i(\hat{\xi}^t_{i,(j|k)}-\hat{\bar{x}},\hat{v}^t_{i,(j|k)}-\hat{\bar{u}})].
\end{multline}
By exploiting Assumption \ref{ass:cost_terminal}, we obtain: $
    J(x,\tilde{V}^{t,s})- J(x,\hat{V}^{t,s}) < - \beta\epsilon +\eta,
$ where (see Appendix of \cite{fagiano2013generalized} for the complete derivation)
$%\begin{equation}
    \eta = \sum_{i=0}^{N-1} \sum_{j=0}^{i} \alpha_l \left( \alpha_f^{(i-j)} \left( \max \limits_{\tilde{v},\hat{v}\in\pazocal{U}}\| \tilde{v}-\hat{v} \| \right) \right)>0,
%\end{equation}
$
and $\alpha_f^{(i)}(a)\dot{=} \underbrace{\alpha_f (\alpha_f(\dots \alpha_f \dots ))}_{\text{$i$ times}}$, $\alpha_f^{(0)}(a)\dot{=}a$, $\pazocal{U}:=\pazocal{U}_1\times\dots\times\pazocal{U}_{|\pazocal{V}_m(k)|}$.
Thus by selecting a value of $\beta(\epsilon) = \eta/\epsilon$ and $\beta\geq\underline{\beta}(\epsilon)$, we obtain:
\begin{equation} \label{eq:CH3_ineq_cost}
    J(x,\tilde{V}^{t,s}) < J(x,\hat{V}^{t,s}),
\end{equation}
Now, let us suppose by contraddiction that the solution $U^{t,s,*}$ to the FHOCP is such that $V_{O}^s(\bar{x}^{s,*}-r) - V_{O}^s(\bar{x}^o-r)\geq\epsilon$. Then equation \eqref{eq:CH3_ineq_cost} is still valid when $J(x,\hat{V}^{t,s})=J(x,U^{t,s,*})$. However, this cannot be possible, since by Assumption \ref{ass:solFHOCP}, the solver computes the global minimizer of the FHOCP.
\end{proof}
\noindent We are now in position to introduce the following proposition presenting the convergence properties of the mt-MPC approach:
\begin{prop} \label{lem:CH3_convergence}
Suppose that Assumptions [\ref{ass:posInv}-\ref{ass:solFHOCP}] hold and consider a given state setpoint $r=f(\bar{x}_r,\bar{u}_r)$. Let us select a value of $\epsilon>0$ and $\beta\geq \underline{\beta}(\epsilon)$.
Then for any feasible initial state $x(0)$, the system $f(x(k),u(k))$ controlled by the mt-MPC law $\kappa(x,r)$ obtained by solving the FHOCP \eqref{eq:MPCproblem} satisfies the constraints, is stable, and converges to a steady state such that~\eqref{eq:optimal_offset} holds. 
\end{prop}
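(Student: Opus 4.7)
The plan is to follow a standard candidate-solution argument for recursive feasibility, exploit the convergence constraint~\eqref{seq:convConstr} as a Lyapunov decrease on the optimal safe cost, and close the argument by invoking~\eqref{eq:optimal_offset} from the preceding lemma to characterise the limit point.

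First I would construct a candidate for $\mathscr{P}(x(k+1),r,\hat{\pazocal{S}},\hat{J}^s(k+1))$ from the optimal solution at time $k$. Specifically, take the safe input as the one-step shifted tail of $U^{s,*}(k)$ appended with $\bar{u}^{s,*}(k)$, keep the artificial reference $(\bar{x}^{s,*}(k),\bar{u}^{s,*}(k))$, and set the tracking sequence equal to this shifted safe sequence (so $u^{t}_{(0|k+1)}=u^{s}_{(0|k+1)}$ is automatic). Since the applied input is $u^{t,*}_{(0|k)}=u^{s,*}_{(0|k)}$, the predicted trajectory under the candidate coincides with the one-step shift of the optimal safe trajectory at $k$, so \eqref{seq:stateconstr}, \eqref{seq:couplingconstr}, and \eqref{seq:termConstr} are inherited from feasibility at $k$ — Assumption~\ref{ass:posInv} ensures that rebasing the safe set at $p_i(k+1)$ does not alter the dynamics. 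The only delicate point is \eqref{seq:safeSet}: the specific form $A_c(p^s_{(j+1|k)}-p^s_{(j|k)})\le b_c/N$ is designed precisely so that the shifted tail remains inside $\hat{\pazocal{S}}_i(p_i(k+1))$, as illustrated in Fig.~\ref{fig:feas}; this is the step where I would be most careful, as this is the structural reason for that choice of constraint. Finally, \eqref{seq:convConstr} on the candidate holds with equality up to the dropped stage term by definition of $\hat{J}^s_i(k+1)$.

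Given recursive feasibility, I would use the optimal safe cost $J_i^{s,o}(k)$ as a Lyapunov-type function. Optimality at $k+1$ combined with feasibility of the candidate yields
\begin{equation*}
J_i^{s,o}(k+1)\le \hat{J}^s_i(k+1)=J_i^{s,o}(k)-l^s_i\bigl(x^{s,*}_{i,(0|k)}-\bar{x}_i^{s,*},\,u^{s,*}_{i,(0|k)}-\bar{u}_i^{s,*}\bigr),
\end{equation*}
so $J_i^{s,o}(k)$ is non-increasing and bounded below. Summing and using $l^s_i \ge \alpha_l(|\cdot|)$ from Assumption~\ref{ass:cost_terminal} gives $x_i(k)\to \bar{x}_i^{s,*}(k)$ (and the analogous statement on the input). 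Because $\bar{x}_i^{s,*}(k)\in\pazocal{R}_i$ and $\pazocal{R}_i$ is bounded, the sequence of artificial references admits accumulation points; at any such accumulation point the lemma preceding the proposition guarantees $V_O^s(\bar{x}^{s,*}-r)\le V_O^s(\bar{x}^o-r)+\epsilon$, which is exactly~\eqref{eq:optimal_offset}. Lyapunov stability about the limiting equilibrium follows from positive definiteness and continuity of $J^{s,o}$ (Assumption~\ref{ass:cost_terminal}) together with the monotone decrease, and constraint satisfaction for all $k\ge 0$ is an immediate byproduct of recursive feasibility.

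The main obstacle I expect is the verification that the shifted candidate satisfies~\eqref{seq:safeSet} after recentering the polytopic safe set at the new position $p_i(k+1)$: this requires carefully chaining the per-step increment bound $b_c/N$ with the $N$ remaining steps of the shifted sequence so as to show inclusion in $\hat{\pazocal{S}}_i(p_i(k+1))$. Everything else — the Lyapunov decrease and the invocation of the preceding lemma to pin down the limit — is essentially bookkeeping once that step is in place.
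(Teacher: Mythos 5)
Your overall architecture matches the paper's: the candidate at time $k+1$ is the one-step shifted tail of the optimal \emph{safe} trajectory (used for both the tracking and safe sequences, which makes the shared first input automatic), the optimal safe cost serves as the Lyapunov-type function with the decrease enforced through \eqref{seq:convConstr}, and the preceding lemma pins the limiting artificial reference to within $\epsilon$ of the optimal reachable steady state. Your treatment of \eqref{seq:safeSet} --- chaining the per-step increments $b_c/N$ over the $N-1$ remaining steps plus the terminal stationarity to get $A_c(p^s_{i,(N|k)}-p_i(k+1))\le \frac{(N-1)b_c}{N}\le b_c$ --- is exactly the telescoping argument the paper uses, and you correctly flag it as a point requiring care.

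There is, however, a genuine gap: you assert that \eqref{seq:couplingconstr} is ``inherited from feasibility at $k$.'' That is false in general, because the constraint at time $k+1$ is imposed with respect to $\pazocal{N}_i(k+1)$, which may contain agents that were \emph{not} in $\pazocal{N}_i(k)$; feasibility at time $k$ says nothing about collision avoidance with those agents. Handling this is the entire point of the construction, and the paper's proof devotes a four-case analysis to it. The nontrivial case is plug-in: if agent $j$ joins $\pazocal{N}_i$ at time $k+1$, then by Assumption~\ref{ass:communication} the communication sets satisfied $\pazocal{C}_i(p_i(k))\cap\pazocal{C}_j(p_j(k))=\emptyset$ at time $k$ (otherwise $i$ and $j$ would already have been neighbours), and since each agent's safe trajectory is confined to its own tightened communication set $\pazocal{C}_i(p_i(k))\ominus\pazocal{O}$ by \eqref{eq:safeSetspeed}--\eqref{seq:safeSet}, the two candidate trajectories cannot intersect, so \eqref{seq:couplingconstr} holds for the enlarged neighbour set. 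Plug-out only removes constraints, and mixed operations reduce to the two cases applied in sequence. Without this argument your proof establishes recursive feasibility only for a fixed topology, which does not cover the time-varying setting of Problem~\ref{pr:feedbackLaw}.
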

\begin{proof}
The proof is divided in two parts. Firstly we prove that the problem is recursively feasible.
Then, we prove the asymptotic stability
of the equilibrium point $(\bar{x}^{s,*},\bar{u}^{s,*})$. 
\\
\textit{Recursive feasibility}: We denote the solution of problem~\eqref{eq:MPCproblem} at time $k$ as $U_{i,k}^{t,s*}=[ u^{t,s*}_{i,(0|k)}, \ u^{t,s*}_{i,(1|k)}, \dots , u^{t,s*}_{i,(N-1|k)}]$, the corresponding optimal predicted state trajectories $X_{i,k}^{t,s*}=[ x^{t,s*}_{i,(0|k)}, \ x^{t,s*}_{i,(1|k)}, \dots, x^{t,s*}_{i,(N|k)}]$, the optimal safe artificial reference $(\bar{x}_{i,k}^{s*}$, $\bar{u}_{i,k}^{s*})$ and with $\pazocal{C}_{i,k}$, $\hat{\pazocal{S}}_{i,k}$ the sets $\pazocal{C}_{i}(p_i(k))$ and $\hat{\pazocal{S}}_{i}(p_i(k))$ at time $k$ respectively. 
As standard in MPC, let us define a candidate solution at time $k+1$.
Due to the possible variation of $\pazocal{N}_{i}(k+1)$, we cannot guarantee that the optimal tracking trajectory at time $k$ can be used to compute a candidate solution at time $k+1$.
Let us firstly observe from~\eqref{eq:MPCproblem}, that the safe trajectory is a sub-optimal solution for the tracking trajectory since it is subject to the same constraints plus constraint~\eqref{seq:safeSet} and~\eqref{seq:convConstr}.
Let us define the candidate solution based on the safe trajectory as follows:
\begin{align}\label{eq:candidateSol}
%\hat{\bar{x}}^{s}_{i,k+1}&=\bar{x}^{s*}_{i,k} \ \ \ \ \ \hat{\bar{u}}^{s}_{i,k+1}=\bar{u}^{s*}_{i,k}  \nonumber \\
    \hat{U}_{i,k+1}^{t,s}&=[ u^{s*}_{i,(1|k)}, \ \dots \ , u^{s*}_{i,(N-1|k)}, \  \bar{u}^{s*}_{i,k}]  \\
    \hat{X}_{i,k+1}^{t,s}&=[ x^{s*}_{i,(1|k)}, \ \dots \ , x^{s*}_{i,(N|k)}, \ \bar{x}^{s*}_{i,k}], \nonumber
\end{align}
and the artificial reference is $\hat{\bar{x}}^{s}_{i,k+1}=\bar{x}^{s*}_{i,k} $,  $\hat{\bar{u}}^{s}_{i,k+1}=\bar{u}^{s*}_{i,k}$.
In the following we analyze how the candidate trajectories~\eqref{eq:candidateSol} satisfy constraints~\eqref{seq:couplingconstr}-\eqref{seq:safeSet} and~\eqref{seq:convConstr}, while the satisfaction of other constraints in~\eqref{eq:MPCproblem} is straightforward and will be omitted.
Satisfaction of~\eqref{eq:safeSetpoly}, implemented as~\eqref{seq:safeSet} is trivial thanks to its construction.
At time $k+1$ for the trajectory~\eqref{eq:candidateSol} we have $
    A_c\sum_{j=2}^{N}\left(p_{i,(j|k)}^{s}-p_{i,(j-1|k)}^{s}\right)+A_c \left(\bar{p}_{i,k}^{s*}-p_{i,(N|k)}^{s}\right)\leq
    A_c\sum_{j=2}^{N}\left(p_{i,(j|k)}^{s}-p_{i,(j-1|k)}^{s}\right)\leq \frac{(N-1) b_c}{N} \leq b_c.$
This implies that at each time step we satisfy constraint~\eqref{eq:safeSetpoly} for the whole horizon $N$, i.e. $A_c(p_{i,(N|k)}-p_{i}(k))\leq b_c$. 
%Satisfaction of constraints~\eqref{seq:stateconstr},~\eqref{seq:termConstrEx} and~\eqref{seq:termConstrSafe} is guaranteed by the properties of the terminal sets stated in Assumption \ref{ass:cost_terminal} and since $\hat{\pazocal{S}}_i(p_i)\subset\mathbb{R}^2$.
Satisfaction of~\eqref{seq:couplingconstr} is instead guaranteed by the safe set $\hat{S}_i(p_i)$ and by the tightening~\eqref{eq:safeSetspeed}, but, due to the time-varying nature of the set $\pazocal{N}_{i}(k)$, at time $k+1$ there are four possibilities: \\
I) $\pazocal{N}_{i}(k+1)=\pazocal{N}_{i}(k)$, in this case the candidate solution satisfy the constraint thanks to~\eqref{seq:couplingconstr}. \\
II) Plug-out operation performed by a set of agents $\pazocal{A}$ i.e. one or more vehicles leaves the set of neighbouring systems for the agent $i$ and $\pazocal{N}_{i}(k+1)=\pazocal{N}_{i}(k)\backslash\pazocal{A}$.
In this case the obstacle avoidance constraint is automatically satisfied. \\
III) Plug-in operation performed by a set of agents $\pazocal{I}$ i.e. one or more vehicles are added to the set of neighbouring systems for the agent $i$, $\pazocal{N}_{i}(k+1)=\pazocal{N}_{i}(k) \cup \pazocal{I}$.
In this case, the safe trajectory of each agent computed at time $k$, satisfies constraint~\eqref{seq:couplingconstr} since $X_{i,k}^{s*}\in\pazocal{C}_{i,k}$ and $\pazocal{C}_{i,k}\cap \pazocal{C}_{j,k} = \emptyset$, representing a collision free trajectory also at time $k+1$.\\
    %tightening~\eqref{eq:safeSetspeed} ensures that if $\pazocal{C}_{i,k}\cap \pazocal{C}_{j,k} = \emptyset$ and $\pazocal{C}_{i,k+1}\cap \pazocal{C}_{j,k+1} \neq \emptyset$ then $\hat{\pazocal{S}}_{i,k+1}\cap \hat{\pazocal{S}}_{j,k+1} = \emptyset$, $\forall j\in \pazocal{I}$ ensuring that the trajectory~\eqref{eq:candidateSol} satisfies constraint~\eqref{seq:couplingconstr}.
IV) Finally when multiple plug-in, plug-out operations are performed, the two previous cases II) or III) can be considered subsequently.
Finally, satisfaction of constraint~\eqref{seq:convConstr} is trivial since the cost associated with the candidate trajectory~\eqref{eq:candidateSol} represents the right-hand side of inequality~\eqref{seq:convConstr} and it is an upper bound of the safe cost at the next time step.
\\
\textit{Asymptotic stability:} We consider standard arguments using Lyapunov stability theory, see e.g. \cite{mayne2000constrained}. In particular we consider the safe cost function~\eqref{eq:safe_cost} as a Lyapunov function. Let us denote with $J^{s,*}(x)$ the optimal safe cost computed by the FHOCP $\mathscr{P}(x,r,\hat{\pazocal{S}},\hat{J}^s)$ . Since the FHOCP~\eqref{eq:MPCproblem} is recursively feasible, the cost function $\hat{J}^{s}(x)$ computed with the candidate solution~\eqref{eq:candidateSol} represents an upper-bound of the value function at the next time step $x(k+1)=f(x(k),\kappa(x(k),r))$, i.e. $\hat{J}^{s}(x)\geq J^{s,*}(x(k+1))$.
Since the decreasing of the value function is imposed through constraint \eqref{seq:convConstr}, we have $J^{s,*}(x(k+1))\leq\hat{J}^{s}(x)\leq J^{s,*}(x)-l^s(x,\kappa(x,r))$. 
\end{proof}

\bibliographystyle{IEEEtran}
\bibliography{ifacconf}             % bib file to produce the bibliography
\end{document}